\newlength{\punctuationfootlength}
\newcommand{\punctuationfootnote}[2]{#2\settowidth{\punctuationfootlength}%
  {#2}\hspace{-0.5\punctuationfootlength}\footnote{#1}}
\g@addto@macro\bfseries{\boldmath}
\DeclareMathOperator{\dist}{dist}
\newcommand{\degloc}[1]{{L}_{\deg}(#1)\xspace}
\newcommand{\distloc}[1]{{L}_{\dist}(#1)\xspace}
\newcommand{\loc}[1]{{L}(#1)\xspace}
\newcommand{\EX}[1]{\mathbb E\left[ #1 \right]}
\newcommand{\pnt}[1]{\boldsymbol{#1}}
\newcommand{\geomdist}[3][]{%
  \lVert #2 - #3 \rVert\ifthenelse{\isempty{#1}}{}{_{#1}}}
\newcommand{\cupdot}{\mathbin{\mathaccent\cdot\cup}}
\title{On the External Validity of Average-Case Analyses of Graph
  Algorithms}
\author{Thomas Bläsius}{Karlsruhe Institute of Technology (KIT),
  Germany}{thomas.blaesius@kit.edu}{https://orcid.org/0000-0003-2450-744X}{}
\author{Philipp Fischbeck}{Hasso Plattner Institute (HPI), University of Potsdam, Germany}{philipp.fischbeck@hpi.de}{https://orcid.org/0000-0002-4104-1840}{}
\authorrunning{T.\ Bläsius and P.\ Fischbeck}
\keywords{Average Case, Network Models, Empirical Evaluation}
\begin{document}

\maketitle

\begin{abstract}
  The number one criticism of average-case analysis is that we do not
  actually know the probability distribution of real-world inputs.
  Thus, analyzing an algorithm on some random model has no
  implications for practical performance.  At its core, this criticism
  doubts the existence of \emph{external validity}, i.e., it assumes
  that algorithmic behavior on the somewhat simple and clean models
  does not translate beyond the models to practical performance
  real-world input.

  With this paper, we provide a first step towards studying the
  question of external validity systematically.  To this end, we
  evaluate the performance of six graph algorithms on a collection of
  \num{2740} sparse real-world networks depending on two properties;
  the heterogeneity (variance in the degree distribution) and locality
  (tendency of edges to connect vertices that are already close).  We
  compare this with the performance on generated networks with varying
  locality and heterogeneity.  We find that the performance in the
  idealized setting of network models translates surprisingly well to
  real-world networks.  Moreover, heterogeneity and locality appear to
  be the core properties impacting the performance of many graph
  algorithms.
\end{abstract}

\newpage

\section{Introduction}
\label{sec:introduction}

The seminal papers of Cook in 1971~\cite{Compl_Theor_Proce_STOC1971}
and Karp in 1972~\cite{Reduc_Among_Combi_Probl_CCC1972} establish that
many fundamental combinatorial problems are NP-hard, and thus cannot
be solved in polynomial time unless $\mathrm{P} = \mathrm{NP}$.  Since
then, the list of NP-hard problems is growing every year; see the book
by Garey and Johnson~\cite{Compu_Intra_Guide_Theor_other1979} for an
extensive list of problems that were shown to be hard in the early
years of complexity theory.

Though the non-existence of polynomial time algorithms (unless
$\mathrm{P} = \mathrm{NP}$) is major bad news, the concept of
NP-hardness is limited to the worst case.  It thus leaves the
possibility of imperfect algorithms that fail sometimes but run
correctly and in polynomial time on most inputs.  Many algorithms used
today are slow in the worst case but perform well on relevant
instances.  An early attempt to theoretically capture this ``good in
practice'' concept is the \emph{average-case analysis}.  There, one
assumes the input to be randomly generated and then proves a low
failure probability or a good expected
performance\punctuationfootnote{We note that within the scope of this
  paper the term \emph{average case} refers to exactly those
  situations where the input is drawn from some probability
  distribution.  This includes proving bounds that hold with high
  probability (instead of in expectation), which would technically be
  better described as \emph{typical case}.  Moreover, it excludes the
  case of randomized algorithms on deterministic inputs.}.  On that
topic, Karp wrote in 1983~\cite{k-pacoa-83} that
\begin{displayquote}[Karp in 1983]
  One way to validate or compare imperfect algorithms for NP-hard
  combinatorial problems is simply to run them on typical instances
  and see how often they fail. \textelp{} While probabilistic
  assumptions are always open to question, the approach seems to have
  considerable explanatory power \textelp{}.
\end{displayquote}

With this promising starting point, one could have guessed that
average-case analysis is an important pillar of algorithmic research.
However, it currently plays only a minor role in theoretical computer
science.  The core reason for this was summarized by Karp almost forty
years later in 2020 in the Lex Fridman
Podcast\punctuationfootnote{Transcript of the Lex Fridman Podcast
  \#111.  The quote itself starts at 1:39:59.  For the full context,
  start at 1:37:28 (\url{https://youtu.be/KllCrlfLuzs?t=5848}).}.
\begin{displayquote}[Karp in 2020]
  The field tended to be rather lukewarm about accepting these results
  as meaningful because we were making such a simplistic assumption
  about the kinds of graphs that we would be dealing with.  \textelp{}
  After a while I concluded that it didn't have a lot of bite in terms
  of the practical application.
\end{displayquote}

At its core, this describes the issue that an average-case analysis is
lacking \emph{external validity}, i.e., the insights on randomly
generated graphs do not transfer to practical instances.  

The simplistic probabilistic assumption mentioned in the above quotes
is that input graphs are drawn from the Erdős--Rényi
model~\cite{er-rgi-59}, where all edges exist independently at random
with the same probability.  This assumption has the advantages that it
is as unbiased as possible and sufficiently accessible to allow for
mathematical analyses.  However, in its simplicity, it is unable to
capture the rich structural properties present in real-world networks,
leading to the lack of external validity.

That being said, since the beginnings of average-case considerations,
there have been several decades of research in the field of network
science dedicated to understanding and explaining properties observed
in real-world networks.  This in particular includes the analysis of
random network models and the transfer of insights from these models
to real networks; indicating external validity.  Thus, we believe that
it is time to revisit the question of whether average-case analyses of
graph algorithms can have external validity.  With this paper, we
present a first attempt at systematically studying this question.

Before describing our approach and stating our contribution, we want
to give two examples from network science, where the existence of
external validity is generally accepted.

\subparagraph{Examples from Network Science.}

The Barabási--Albert model~\cite{ba-esrn-99} uses the mechanism of
\emph{preferential attachment} to iteratively build a network.  Each
newly added vertex chooses a fixed number of neighbors among the
existing vertices with probabilities proportional to the current
vertex degrees.  This simple mechanism yields \emph{heterogeneous}
networks, i.e., networks with power-law degree distributions where
most vertices have a small degree while few vertices have very high
degree\punctuationfootnote{Barabási and Albert were not the first to
  study a preferential attachment mechanism; see, e.g., Price's
  model~\cite{gener_theor_bibli_other_jour1976}.  However, there is no
  doubt that their highly influential paper~\cite{ba-esrn-99}
  popularized the concept.}.  It is well known that networks generated
by this model are highly artificial, exhibiting properties that are
far from what is observed in real-world networks.  Nonetheless, beyond
the specific model, it is generally accepted that the mechanism of
preferential attachment facilitates power-law distributions.  Thus,
assuming external validity, the Barabási--Albert model can serve as an
explanation of why we regularly observe power-law distributions in
real-world data.  Moreover, whenever we deal with a process involving
preferential attachment, we should not be surprised when seeing a
power-law distribution.

The Watts--Strogatz model~\cite{ws-cdswn-98} first starts with a ring
lattice, i.e., the vertices are distributed uniformly on a circle and
vertex pairs are connected if they are sufficiently close.  This
yields a regular graph with high \emph{locality}, i.e., all
connections are short and we observe many triangles.  Moreover, ring
lattices have high diameter.  The second step of the Watts--Strogatz
model introduces noise by randomly rewiring edges.  This diminishes
locality by replacing local connections with potentially long-range
edges.  Watts and Strogatz demonstrate that only little locality has
to be sacrificed before getting a small-world network with low
diameter.  Again, this model is highly artificial and thus far from
being a good representation for real-world networks.  However, it
seems generally accepted that these observations have implications
beyond the specific model, namely that there is a simple mechanism
that facilitates the small-world property even in networks with mostly
local connections.  Thus, in real-world settings where random
long-range connections are possible, one should not be surprised to
observe the small-world property.

\subparagraph{Contribution.}

We consider algorithms for six different problems that are known to
perform better in practice than the worst-case complexity would
suggest.  We evaluate them on network models that allow for varying
amounts of locality and heterogeneity\punctuationfootnote{For
  non-local networks, we use the Erdős--Rényi and the Chung--Lu model for
  homogeneous and heterogeneous degree distributions, respectively.
  For local networks, we use geometric inhomogeneous random graphs
  (GIRGs), which let us vary the amount of locality and
  heterogeneity.}.  This shows us the impact of these two properties
on the algorithms' performance in the controlled and clean setting of
generated networks.  We compare this with practical performance by
additionally evaluating the algorithms on a collection of \num{2740}
sparse real-world networks.  Our overall findings can be summarized as
follows.  Though the real-world networks yield a less clean and more
noisy picture than the generated networks, the overall dependence of
the performance on locality and heterogeneity coincides with
surprising accuracy for most algorithms.  This indicates that there is
external validity in the sense that if, e.g., increasing locality in
the network models improves performance, we should also expect better
performance for real-world networks with high locality.  Moreover, it
indicates that locality and heterogeneity are the core properties to
impact the performance for many networks.
More specifically, we have the following findings for the different
algorithms.
\begin{itemize}
\item The bidirectional BFS for computing shortest paths in undirected
  networks runs in sublinear time except on homogeneous and local
  networks.  The running times are very similar for the generated and
  real-world networks.
\item The iFUB algorithm~\cite{compu_diame_realw_undir_jour2013} for
  computing the diameter performs well on heterogeneous networks when
  starting it with a vertex of highest degree.  When choosing the
  starting vertex via 4-sweep instead, it additionally performs well
  on networks that are homogeneous and local.  This trend is true for
  generated and real-world networks.
\item The dominance rule, a reduction rule for the vertex cover
  problem, performs well for networks that are sufficiently local or
  heterogeneous.  Again, this trend can be observed for generated and
  real-world networks.
\item The Louvain algorithm~\cite{bgll-fucln-08} for clustering graphs
  requires few iterations for most generated and real-world networks
  independent of locality and heterogeneity.  Though the experiments
  indicate that low locality increases the chance for hard instances,
  the results are inconclusive as locality and heterogeneity do not
  seem to be the main deciding factors.
\item The run time for enumerating all maximal cliques mainly depends
  on the output size, which can be exponential in the worst-case.
  Surprisingly, all generated networks have at most $m$ maximal
  cliques, were $m$ is the number of edges.  Moreover, the number
  decreases for increasing locality.  We make the identical
  observation (not only asymptotically but with the same constant
  factors) on \SI{93}{\%} of the real-world networks.
\item A reduction rule for computing the chromatic number of a graph
  with clique number $\omega$ is to reduce it to its
  $\omega$-core~\cite{Solvi_Maxim_Cliqu_Verte_jour2015}.  It works
  well if the degeneracy is low compared to $\omega$.  For generated
  and real-world networks, the clique number and the degeneracy behave
  almost identical, both increasing for higher locality and
  heterogeneity.  The reduction rule itself has decent performance on
  very heterogeneous networks.  On less heterogeneous networks, it
  performs better for higher localities.  In addition, we observe an
  interesting threshold behavior in the average degree, depending on
  the locality, for the generated networks.
\end{itemize}

Our insights for the specific algorithms are interesting in their own
right, independent of the question of external validity.  Moreover,
our experiments led to several interesting findings that are beyond
the core scope of this paper.  These can be found in the appendix.

In Section~\ref{sec:basic-defin-heter}, we introduce some basic
notation and formally define measures for heterogeneity and locality.
In Section~\ref{sec:data-set}, we describe the set of real-world and
generated networks we use in our experiments.
Section~\ref{sec:comp-betw-real-world-and-models} compares generated
and real-world networks for the different algorithms.  Related work as
well as our insights specific to the algorithms are discussed in this
section.  In Section~\ref{sec:disc-concl} we conclude with a
discussion of our overall results.

Our source code is available on
GitHub\footnote{\url{https://github.com/thobl/external-validity}}.  It
is additionally archived at
Zenodo\footnote{\url{https://doi.org/10.5281/zenodo.8058432}},
together with a docker image for easier reproducibility.  The latter
repository additionally includes the real-world network data set as
well as all generated data (networks and statistics).

\section{Basic Definitions, Heterogeneity, and Locality}
\label{sec:basic-defin-heter}

Let $G = (V, E)$ be a graph.  Throughout the paper, we denote the
number of vertices and edges with $n = |V|$ and $m = |E|$.  For
$v \in V$, let $N(v) = \{u \mid \{u, v\} \in E \}$ be the
\emph{neighborhood} of $v$, and let $\deg(v) = |N(v)|$ be the
\emph{degree} of $v$.  Additionally, $N[v] = N(v) \cup \{v\}$ is the
\emph{closed neighborhood} of $v$.  An edge $e \in E$ is a
\emph{bridge} if $G - e$ is disconnected, where $G - e$ denotes the
subgraph induced by $E \setminus \{e\}$.

\subsection{Heterogeneity}

We define the \emph{heterogeneity} of a graph as the logarithm (base
10) of the coefficient of variation of its degree distribution.  To
make this specific, let $\mu = \frac{1}{n} \sum_{v \in V} \deg(v)$ be
the average degree of $G = (V, E)$, and let
$\sigma^2 = \frac{1}{n} \sum_{v \in V} (\deg(v) - \mu)^2$ be the
variance.  Then, the \emph{coefficient of variation} is
$\sigma / \mu$, i.e., the standard deviation relative to the mean.
Thus, the heterogeneity is $\log_{10}(\sigma / \mu)$.  The resulting
distribution of heterogeneity values is shown in
Figure~\ref{fig:locality_heterogeneity_density}. The figure includes thresholds for extreme heterogeneity values that we use to filter real-world graphs in our visualizations; see Section~\ref{sec:comp-betw-real-world-and-models} for details.

\subsection{Locality}
\label{sec:definition-locality}

We define locality as a combination of two different notions of
locality on edges.  The degree locality of an edge is high if its
endpoints have many common neighbors.  This is similar to the commonly
known local clustering coefficient of vertices.  Our second measure,
the distance locality, captures the locality of edges that do not have
common neighbors.  In the following, we first introduce these
parameters.  The subsequent discussion helps to interpret them and
justifies our choices.  For the distribution of the different locality
values over the networks see
Figure~\ref{fig:locality_heterogeneity_density}.

\begin{figure}
  \centering
  \includegraphics{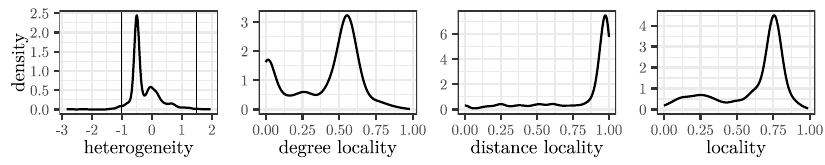}
  \caption{The density (kernel density estimation) of heterogeneity,
    degree locality, distance locality, and locality of the networks
    in our data set of real-world networks.}
  \label{fig:locality_heterogeneity_density}
\end{figure}

\subparagraph{Degree Locality.}

For $u, v \in V$, let $\deg(u, v) = |N(u) \cap N(v)|$ be the
\emph{common degree} of $u$ and $v$.  For a non-bridge edge
$\{u, v\} \in E$ the \emph{degree locality} is defined as
\begin{equation*}
  \degloc{\{u, v\}} = \frac{\deg(u, v)}{\min(\deg(u), \deg(v)) - 1}. 
\end{equation*}
The $-1$ in the denominator accounts for the fact that $u$ is always
in $v$'s neighborhood but never in the common neighborhood.  With
this, we get $\degloc{\{u, v\}} \in [0, 1]$ and
$\degloc{\{u, v\}} = 0$ if and only if the neighborhoods of $u$ and
$v$ are disjoint.  Moreover, $\degloc{\{u, v\}} = 1$ if and only if
$u$ is only connected to $v$ and to neighbors of $v$ or vice versa.
Essentially, $\degloc{\{u, v\}}$ measures in how many triangles
$\{u, v\}$ appears.  Note that the denominator is never $0$ as we
require the edge to not be a bridge and thus $u$ and $v$ both have
degree at least~$2$.

The \emph{degree locality} $\degloc{G}$ of $G$ is the average degree
locality over all non-bridge edges.

\subparagraph{Distance Locality.}

For $u, v \in V$, let $\dist(u, v)$ be the distance between $u$ and $v$
in $G$.  If $\dist(u, v) = 1$, i.e., $\{u, v\}$ is an edge, we are
additionally interested in the detour we have to make when not
allowing to use the direct edge $\{u, v\}$.  To this end, we define
the \emph{detour distance} $\dist^+(u, v)$ to be the distance between
$u$ and $v$ in $G - \{u, v\}$.

For a set of vertex pairs $P \subseteq {V \choose 2}$ we define
$\dist(P)$ to be the \emph{average distance of $P$}, i.e.,
\begin{equation*}
  \dist(P) = \frac{1}{|P|} \sum_{\{u, v\} \in P} \dist(u, v).
\end{equation*}
Analogously, we define $\dist^+(P)$ to be the average detour distance
of $P$.

Let $\overline{E} = {V \choose 2} \setminus E$ be the \emph{non-edges}
of $G$ and assume that $\overline{E} \not= \emptyset$.  Note that
$\dist(\overline{E}) \ge 2$, as non-adjacent vertex pairs have
distance at least~$2$.  Assume for now that $\dist(\overline{E}) > 2$.
For a non-bridge edge $\{u, v\} \in E$, we define the \emph{distance
  locality} as
\begin{equation*}
  \distloc{\{u, v\}} =  1 -
  \frac{\dist^+(u, v) - 2} {\dist(\overline E) - 2}.
\end{equation*}
Note that the numerator is $0$ if and only if $u$ and $v$ have a
common neighbor, which yields $\distloc{\{u, v\}} = 1$.  Moreover, we
have $\distloc{\{u, v\}} = 0$ if and only if the detour distance
between $u$ and $v$ equals the average distance between non-adjacent
vertex pairs in $G$.  Finally, $\distloc{\{u, v\}}$ can be negative if
$\{u, v\}$ connects a vertex pair that is otherwise more distant than
one would expect for a non-adjacent vertex pair.  Thus, the distance
locality essentially measures how short the edge $\{u, v\}$ is
compared to the average distance in the graph.  In the special case of
$\dist(\overline{E}) = 2$, we define $\distloc{\{u, v\}} = 0$.

The \emph{distance locality} $\distloc{G}$ of $G$ is the maximum of
$0$ and the average distance locality over all non-bridge edges.

\subparagraph{Locality.}

The \emph{locality} $\loc{G}$ of $G$ is the average of the degree and
the distance locality, i.e.,
$\loc{G} = (\degloc{G} + \distloc{G}) / 2$.

To interpret this parameter, let
$\loc{e} = \frac{1}{2}(\degloc{e} + \distloc{e})$ be the \emph{edge
  locality} of a non-bridge edge $e = \{u, v\} \in E$.  Note that
$\loc{G}$ is basically\footnote{This is true unless the average over
  all distance localities is negative, in which case we capped the
  distance locality at~$0$.  See
  Section~\ref{sec:limitations-locality-details} for a detailed
  discussion.} the average of all edge localities.  Observe that
$\degloc{e} > 0$ and $\distloc{e} = 1$ if $u$ and $v$ have a common
neighbor.  Otherwise $\degloc{e} = 0$ and $\distloc{e} < 1$.  Thus, we
in particular get the following regimes for $\loc{e}$.

\begin{itemize}
\item $\loc{e} \in \left(\frac{1}{2}, 1\right]$ if $u$ and $v$ have a
  common neighbor.  The more common neighbors $u$ and $v$ have, the
  higher $\loc{e}$.
\item $\loc{e} \in \left(0, \frac{1}{2}\right)$ if $u$ and $v$ have no
  common neighbor but are closer in $G - e$ than the average
  non-adjacent vertex pair in $G$.  The closer $u$ and $v$ are, the
  higher $\loc{e}$.
\end{itemize}

\subparagraph{Discussion \& Comparison to the Clustering Coefficient.}

The degree locality is closely related to the commonly known local
clustering coefficient.  Note that the degree locality (like the
clustering coefficient) only cares for triangles.  Though this is
desirable in some cases, it does not provide a good separation between
graphs with few triangles.  An extreme case are bipartite graphs that
have no triangles and thus degree locality and clustering
coefficient~\num{0}.  However, we would regard, e.g., grids as highly
local.  The distance locality solves this issue by essentially
defining a measure of locality that distinguishes between graphs of
low degree locality.
Figure~\ref{fig:locality_heterogeneity_clustering} shows a comparison
of our locality values to the local clustering coefficient.  Note how,
for high values, the locality behaves as the clustering coefficient
(scaled by a factor of \num{2}), while it provides additional
separation for networks with clustering close to~\num{0}.  See
section~\ref{sec:comp-with-clust-details} for an extended discussion.

\begin{figure}
  \centering
  \includegraphics{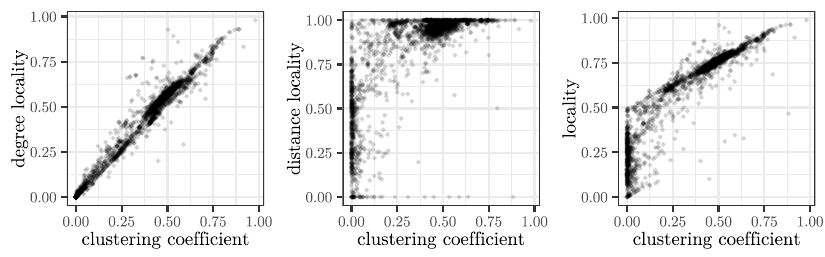}
  \caption{Comparison of the average local clustering coefficient to
    the degree locality (left), the distance locality (center), and
    the locality (right).  Each dot represents one network from our
    data set of real-world networks.}
  \label{fig:locality_heterogeneity_clustering}
\end{figure}

\subparagraph{Limitations.}

We note that our locality definition has some limitations, e.g., it is
undefined for trees as trees have no non-bridge edges.  However, these
limitations do not pose an issue in the context of this paper; see
Section~\ref{sec:limitations-locality-details} for more details.

\subparagraph{Computing the Locality.}

Computing the distance locality efficiently is not straight-forward.
In Section~\ref{sec:comp-edge-local-details}, we show how to compute
it by computing two values for a given graph: the average distance
between all vertex pairs, and the average detour distance over all
non-bridge edges.  The latter can be computed exactly in reasonable
time using some insights from Section~\ref{sec:bidirectional-search}
(see Section~\ref{sec:comp-edge-local-details} for details).

For the average distance between all vertex pairs, we use an
approximation.  Specifically we implemented the algorithm by Chechik,
Cohen, and Kaplan~\cite{Avera_Dista_Queri_throu_APPROX2015}.  We
observed that there is a simple way to significantly improve its
approximation ratio by conditioning on random choices that have been
made earlier in the algorithm.  Additionally, we compared the
algorithm with the straight-forward method of computing the shortest
path between uniformly sampled vertex pairs.  Our experiments indicate
that the preferable method depends on the locality and heterogeneity
of a network.  Though these results are interesting in their own
right, they are somewhat beyond the scope of this paper and thus
deferred to Section~\ref{sec:average-distance}.

\section{Networks}
\label{sec:data-set}

\subsection{Real-World Networks}

We use \num{2740} graphs from Network
Repository~\cite{Netwo_Data_Repos_with_AAAI2015}, which we selected as
follows.  We started with all networks with at most \SI{1}{M} edges
and reduced each network to its largest connected component.  We
removed multi-edges and self-loops and ignored weights or edge
directions.  We tested the resulting connected, simple, undirected,
and unweighted graphs for isomorphism and kept only one copy for each
group of isomorphic networks.  This resulted in \num{2977} networks.
From this we removed \num{237} networks for different reasons.
\begin{itemize}
\item We removed \num{111} networks that were randomly generated and
  thus do not count as real-world networks.\footnote{Finding generated
    networks was done manually by searching for suspicious naming
    patterns or graph properties.  For each candidate, we checked its
    source to verify that it is a generated network.  Though we
    checked thoroughly, there are probably a few random networks
    hiding among the real-world networks.}
\item We removed \num{2} trees.  They are not very interesting, and
  locality is not defined for trees.
\item We removed \num{124} graphs with density at least \SI{10}{\%}.
  Our focus lies on sparse graphs and network models for sparse
  graphs.  Thus, dense graphs are out of scope.
\end{itemize}

\subsection{Random Networks}
\label{sec:random-networks}

We use three random graph models to generate networks; the
Erdős--Rényi model~\cite{er-rgi-59} (non-local and homogeneous), the
Chung--Lu model~\cite{cl-adrgged-02,cl-ccrgg-02} (non-local and
varying heterogeneity), and the GIRG
model~\cite{Geome_inhom_rando_graph_jour2019} (varying locality and
heterogeneity).  For the latter, we use the efficient implementation
in~\cite{Effic_Gener_Geome_Inhom_ESA2019}.  In the following we
briefly define the models, discuss the model choice, and specify what
networks we generate.

\subparagraph{Network Model Definitions.}

Given $n$ and $m$, the \emph{Erdős--Rényi model} draws a graph
uniformly at random among all graphs with $n$ vertices and $m$ edges.

Given $n$ weights $w_1, \dots, w_n$ with $W = \sum w_v$, the
\emph{Chung--Lu model} generates a graph with $n$ vertices by creating
an edge between the $u$th and $v$th vertex with probability
\begin{equation*}
  p_{u, v} = \min\left\{\frac{w_u w_v}{W}, 1\right\}.
\end{equation*}
The expected degree of $v$ is then roughly proportional to $w_v$.  We
use weights that follow a power law.  Specifically, for a constant $c$
and a power-law exponent $\beta > 2$, we choose
\begin{equation*}
  w_v = c\cdot v^{-\frac{1}{\beta - 1}}.
\end{equation*}
Thus, the parameters are the number of vertices $n$, the expected
average degree (controlled via $c$), and the power-law exponent
$\beta$.  The latter controls the heterogeneity of the network:
heterogeneity is high for small $\beta$ and for $\beta \to \infty$ we
get uniform weights.

The \emph{GIRG model} (geometric inhomogeneous random graphs) augments
the Chung--Lu model with an underlying geometry.  In a first step,
each vertex $v$ is mapped to a random position $\pnt{v}$ in some
$d$-dimensional ground space.  Let $\geomdist{\pnt{u}}{\pnt{v}}$
denote the distance between vertices $u$ and $v$ in that space.  Then,
for a so-called \emph{temperature} $T \in (0, 1)$, the vertices $u$
and $v$ are connected with probability
\begin{equation*}
  p_{u, v} =
  \min\left\{\left(\frac{1}{\geomdist{\pnt{u}}{\pnt{v}}^d}\cdot\frac{w_u
        w_v}{W}\right)^{\frac{1}{T}}, 1\right\}.
\end{equation*}
Additionally, for $T=0$, a threshold variant is obtained, with
$p_{u, v} = 1$ if $\geomdist{\pnt{u}}{\pnt{v}}^d \leq w_u v_w / W$ and
$p_{u, v} = 0$ otherwise.  For the weights, we use the same power-law
weights as for the Chung--Lu model.  As ground space, we usually use a
2-dimensional torus $\mathbb T^2 = [0, 1] \times [0, 1]$ with maximum
norm, which is basically just a unit square with distances wrapping
around in both dimensions.  More precisely, for
$\pnt{u} = (x_u, y_u) \in \mathbb T$ and
$\pnt{v} = (x_v, y_v) \in \mathbb T$, we have
$\geomdist{\pnt{u}}{\pnt{v}} = \max\{\min\{|x_u - x_v|, 1 - |x_u -
x_v|\},\min\{|y_u - y_v|, 1 - |y_u - y_v|\}\}$.  With this, the
parameters that remain to be chosen for the GIRG model are the same as
for the Chung--Lu model plus the temperature $T$, which mainly
controls the locality.

\subparagraph{Discussion of the Network Models.}

We note that there is a plethora of network models out there.  So, why
did we choose these particular models?  To answer this, consider two
different use-cases for network models.  First, to explain why and how
certain properties emerge in networks assuming some mechanism existing
in the real world.  Secondly, to draw conclusions from the existence
of certain properties in a network.

An example of the first perspective is the Barabási--Albert
model~\cite{ba-esrn-99}, where the simple and believable mechanism of
preferential attachment yields a power-law degree distribution.

The second perspective is what we are concerned with in this paper.
We do not aim at explaining how locality or heterogeneity emerge in
networks.  Instead we want to study the implications of these
properties on algorithmic performance.  For this purpose, we believe
that it is crucial for the model to be maximally unbiased beyond the
explicitly assumed properties.  Ideally, one uses the maximum entropy
model with respect to the given constraints.

With this in mind, note that the Erdős--Rényi model draws graphs
uniformly at random, given the number of vertices and edges.  This is
the maximum entropy model and thus maximally unbiased with the
constrains that the number of vertices and edges are given.

For graphs whose degree distribution follows a power law, one could
use the above mentioned Barabási--Albert model.  However, this model
is biased in all sorts of ways that go beyond assuming a power-law
degree distribution (e.g., the resulting graphs are $k$-degenerate for
average degree $2k$).  The maximum entropy model for graphs with
power-law distribution is the (soft) configuration
model~\cite{vlk-smerggpdd-18}.  The Chung--Lu model we
use is close to the soft configuration model but much easier to work
with.  Thus, it is a good compromise between being unbiased beyond the
assumptions we want to make and being able to efficiently generate
networks or conduct theoretical analyses.

Finally, the GIRG model enhances the Chung--Lu model with locality by
assuming an underlying geometry.  Though the model is set up to be as
unbiased as possible beyond this assumption, we are not aware of a
formal proof for this.  Moreover, it is up for debate whether a
geometry is the best way to introduce locality.  Instead, one could,
e.g., assume a fixed set of communities like in the stochastic block
model~\cite{hll-sbfs-83}.  However, assigning vertices to geometric
positions seems like the most straight-forward way of introducing a
smooth notion of similarity.  Thus, at the current state of research,
we believe that assuming an underlying geometry is the best way to
achieve locality for our purpose.  As an added bonus, the amount of
locality can be controlled with just a single parameter (the
temperature) and there is an efficient
generator~\cite{Effic_Gener_Geome_Inhom_ESA2019} readily available.

\subparagraph{Generated Networks \& Parameter Choices.}

For all models, we generate networks with $n = \SI{50}{k}$ vertices
and (expected) average degree \num{10}.  For the power-law exponent
$\beta$ in Chung--Lu graphs and GIRGs as well as for the temperature
$T$ in GIRGs we chose the values shown in
Figure~\ref{fig:params-of-generated-networks}, yielding a rather
uniform distribution of heterogeneity and locality.

\begin{figure}[t]
  \centering
  \includegraphics{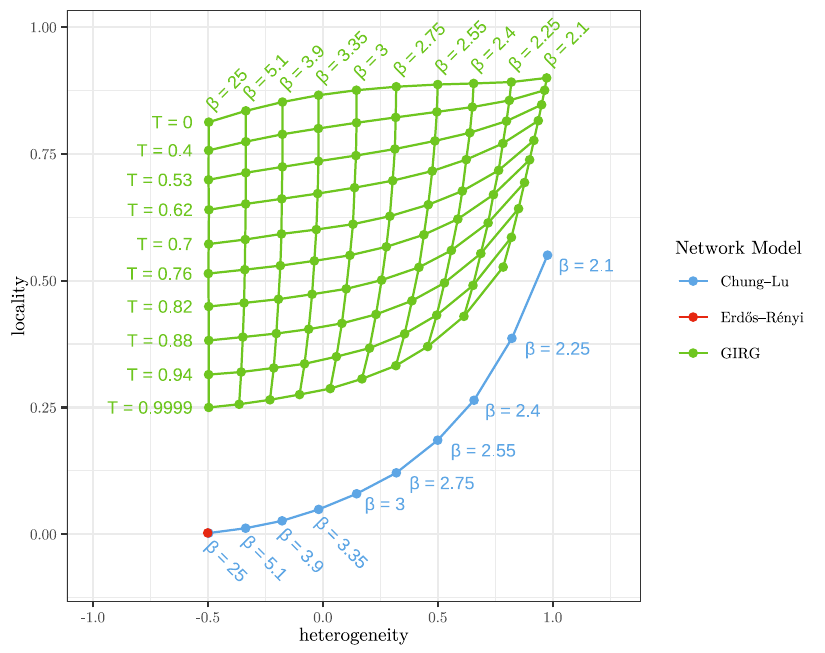}
  \caption{Heterogeneity and locality of the generated networks from
    the different models.  Each point is the average of five samples
    with the given parameter configuration.}
  \label{fig:params-of-generated-networks}
\end{figure}

Note that we have ten different values for each of the parameters
$\beta$ and $T$, which results in \num{100} different parameter
configurations for GIRGs, ten configurations for Chung--Lu graphs, and
a single configuration for Erdős--Rényi graphs.  For each of these
\num{111} configurations, we generated five networks.  In the plots,
we always use a single dot for each configuration representing the
average over five individual networks.

As for the real-world networks, we reduce each generated graph to its
largest connected component.  With the above parameter choices, this
does not change the size of the networks by too much; see
Section~\ref{sec:giant-component}.

In Section~\ref{sec:chromatic-number}, we additionally consider
networks of average degree \num{20}, using otherwise the same
parameter settings.

We note the power-law weights converge to uniform weights for
$\beta \to \infty$.  Thus, in the limit, the probability distribution
of the Chung--Lu model (almost)\footnote{The distributions are not
  exactly the same, as the Chung--Lu model achieves the desired number
  of edges only in expectation.  But conditioning on the average
  degree having its expected value, the two coincide.} coincides with
that of the Erdős--Rényi model.  It is thus not surprising that the
Erdős--Rényi graphs and the Chung--Lu graphs with $\beta = \num{25}$
occupy almost the same spot in
Figure~\ref{fig:params-of-generated-networks}.

\subparagraph{Geometric Ground Space.}

We usually use a torus with dimension $d = 2$ as ground space of the
GIRG model.  For temperature $T = 0$, we have a threshold model where
two vertices are connected if and only if their distance is
sufficiently small compared to the product of their weights.  Thus,
for $\beta \to \infty$ this converges to the commonly known model of
random geometric graphs; just on the torus $\mathbb T$ instead the
more commonly used unit square in the Euclidean plane.  Moreover, for
smaller $\beta$, we basically get hyperbolic random
graphs~\cite{kpk-hgcn-10}; just in one more dimension than usual.

The reason for using a torus instead of a square in the GIRG model is
that a square leads to special situations close to the boundary while
the torus wraps around and thus is completely symmetric.  This usually
simplifies theoretic analysis without changing too much otherwise.
Interestingly, we observe in our experiments that the choice of torus
vs.\ square makes a substantial difference for computing the diameter;
see Section~\ref{sec:diameter}.  To make this comparison, we
additionally generated five GIRGs with a square as ground space for
each of the parameter settings; see Section~\ref{sec:girgs-square} for
details on how we generated these networks.

\section{Comparison Between the Models and Real-World Networks}
\label{sec:comp-betw-real-world-and-models}

Each of the following subsections compares the performance of a
different algorithm between generated and real-world networks.  For
the cost $c$ of the algorithm, we plot $c$ depending on heterogeneity and locality
using color to indicate the cost; see, e.g., Figure~\ref{fig:bbbfs}.  The left and middle plot
show one data point for each parameter setting of the models and each
real-world network, respectively.  The right plot aggregates
real-world networks with similar locality and heterogeneity.  Each
point represents a number of networks indicated by its radius (log
scale).  We regularly assume the cost $c$ to be polynomial in $m$ (or
$n$), i.e., $c = m^x$, and plot $x = \log_m c$.  In this case, the
color in the left and right plot shows the mean exponent $x$
aggregated over the networks.

The set of real-world networks contains some networks with extreme heterogeneity; their inclusion in the plots would squeeze the non-extreme data points together and make interpretation of the middle and right plot more difficult. Thus, in these plots we only consider real-world networks with a heterogeneity value between~\num{-1.0} and~\num{1.5}. We refer to Appendix~\ref{sec:extreme-heterogeneity-ext} for full versions of such plots including all networks with extreme heterogeneity. This restriction only affects such plots; in our statistics and discussions, we consider all real-world networks, including those with extreme heterogeneity.

\subsection{Bidirectional Search}
\label{sec:bidirectional-search}

We can compute a shortest path between two vertices $s$ and $t$ using
a \emph{breadth first search (BFS)}.  The BFS explores the graph
layer-wise, where the $i$th layer contains the vertices of distance
$i$ from~$s$.  By \emph{exploring} the $i$th layer, we refer to the
process of iterating over all edges with an endpoint in the $i$th
layer, thereby finding layer $i+1$.  The search can stop once the
current layer contains $t$.

The \emph{bidirectional BFS} alternates the exploration of layers
between a \emph{forward BFS} from $s$ and a \emph{backward BFS}
from~$t$.  The alternation strategy we study here greedily applies the
cheaper of the two explorations in every step.  The cost of exploring
a layer is estimated via the sum of degrees of vertices in that layer.
The search stops once the current layers of forward and backward
search intersect.  The resulting algorithm is called \emph{balanced
  bidirectional BFS}~\cite{KADAB_ADapt_Algor_Betwe_jour2019}, which we
often abbreviate with just \emph{bidirectional BFS} in the following.

The \emph{cost} $c$ for the bidirectional BFS is the average number
of edge explorations over \num{100} random $st$-pairs.  Note that
$c \le 2m$, as each edge can be explored at most twice; once from each
side.  Figure~\ref{fig:bbbfs} shows the exponent $x$ for $c = m^x$
depending on heterogeneity and locality.

\begin{figure}
  \centering
  \includegraphics{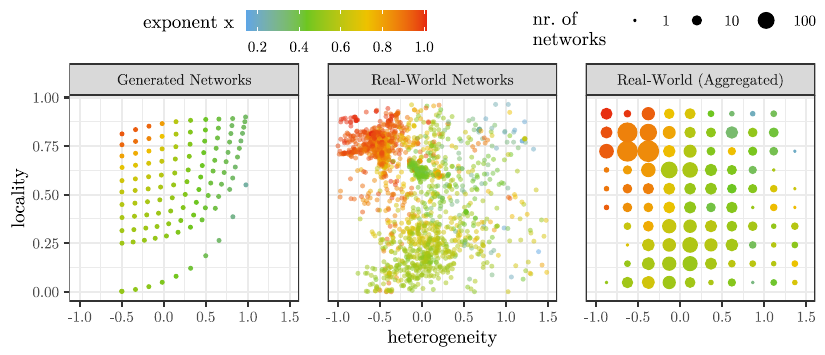}
  \caption{The exponent $x$ of the average cost $c = m^x$ of the
    bidirectional BFS over \num{100} $st$-pairs.}
  \label{fig:bbbfs}
\end{figure}

\subparagraph{Impact of Locality and Heterogeneity.}

For the generated data, we see that networks with high locality and
homogeneous degree distribution (top left corner) have an exponent of
around \num{1} (red).  Thus, the cost of the bidirectional BFS is
roughly $m$, which matches the worst-case bound.  If the network is
heterogeneous (right) or less local (bottom), we get significantly
lower exponents of around \num{0.5}, indicating a cost of roughly
$\sqrt{m}$.  The cost is particularly low for very heterogeneous
networks.  Overall, we get a strict separation between hard (high
locality, low heterogeneity) and easy (low locality or high
heterogeneity) instances.

For real-world networks, we observe the same overall behavior that
instances that are local and homogeneous tend to be hard while all
others tend to be easy.  There are only few exceptions to this,
indicating that the heterogeneity and locality are usually the crucial
properties impacting the performance of the bidirectional BFS.

\subparagraph{Discussion.}

Borassi and Natale~\cite{KADAB_ADapt_Algor_Betwe_jour2019} found that
the bidirectional BFS is surprisingly efficient on many networks,
which they used to efficiently compute the betweenness centrality.
Additionally, they studied the bidirectional BFS theoretically on
random network models where the edges are drawn independently, i.e.,
when there is no locality.  They in particular prove that the search
requires with high probability only $O(\sqrt{n})$ time for degree
distributions with bounded variance.  This includes the Erdős--Rényi
model, and the Chung--Lu or the configuration model with power-law
degree-distribution with power-law exponent $\beta \ge 3$.  For
$\beta \in (2, 3)$ (unbounded variance), the bound is $O(n^x)$ for
$x \in (0.5, 1)$\punctuationfootnote{\label{footnote:exponent}We note
  that the bounds for heterogeneous networks consider the worst-case
  over all $st$-pairs (which is dominated by the maximum degree for
  low power-law exponents) while our experiments consider the average
  over \num{100} random $st$-pairs.  This explains why the theoretic
  bounds, though sublinear, seem to be worse than the empirical
  bounds.}.

For graphs with locality, it is known that the bidirectional BFS
requires linear ($x = 1$) and sublinear
($x \in (0.5, 1)$)\footnotemark[\getrefnumber{footnote:exponent}] time
on geometric random graphs in Euclidean and hyperbolic space,
respectively~\cite{Effic_Short_Paths_Scale_jour2022}.  We note that
geometric random graphs in Euclidean and hyperbolic space essentially
correspond to the top-left and top-right corner of
Figure~\ref{fig:bbbfs}, as they can be viewed as special cases of
GIRGs without or with heterogeneity, respectively.

Overall, the theoretical results on network models cover all four
corners of the space spanned by heterogeneity and locality.  Moreover,
the predictions on the models match the observations on real-world
networks, i.e., for most networks, the practical run time of the
bidirectional BFS is not surprising but as expected.  This indicates
that locality and heterogeneity are the core deciding features for the
performance of the bidirectional BFS.

\subsection{Diameter}
\label{sec:diameter}

The \emph{eccentricity} of $s \in V$ is $\max_{t\in V}\dist(s, t)$,
i.e., the distance to the vertex farthest from $s$.  The
\emph{diameter} of the graph $G$ is the maximum eccentricity of its
vertices.  It can be computed using the \emph{iFUB
  algorithm}~\cite{compu_diame_realw_undir_jour2013}.  It starts with
a root $r \in V$ from which it computes the BFS tree $T$.  It then
processes the vertices bottom up in $T$ and computes their
eccentricities using a BFS per vertex.  This process can be pruned
when the distance to $r$ is sufficiently small compared to the largest
eccentricity found so far.  Pruning works well if $r$ is a central
vertex in the sense that it has low distance to many vertices and a
shortest path between distant vertices gets close to $r$.

There are different strategies of choosing the central vertex $r$.
The \emph{iFUB+hd} algorithm simply chooses a vertex of highest degree
as starting vertex.  A more sophisticated way of selecting $r$ works
as follows.  A \emph{double
  sweep}~\cite{Fast_compu_empir_tight_jour2008} starts with a vertex
$u$, chooses a vertex $v$ at maximum distance from $u$, and returns a
vertex $w$ from a middle layer of the BFS tree from~$v$.  A
\emph{4-sweep}~\cite{compu_diame_realw_undir_jour2013} consists of two
double sweeps, starting the second sweep with the result $w$ of the
first sweep.  The \emph{iFUB+4-sweephd} algorithm chooses $r$ by doing
a 4-sweep from a vertex of maximum degree.

The \emph{cost} $c$ of iFUB+hd and iFUB+4-sweephd is the number of
BFSs it performs (including the four initial BFSs of the 4-sweep for
iFUB+4-sweephd).  Note that $c \le n$.  Figure~\ref{fig:diameter-ifub}
shows the exponent $x$ for $c = n^x$ depending on heterogeneity and
locality.  For iFUB-hd and iFUB+4-sweephd, \num{17} and \num{16}
real-world networks, respectively, are excluded from the plots as they
exceeded the time limit of \SI{30}{min}.  The GIRG model uses a square
as ground space instead of the usual torus; see discussion below for
details.

\begin{figure}[p]
  \centering
  \begin{subfigure}{\textwidth}
    \centering
    \includegraphics{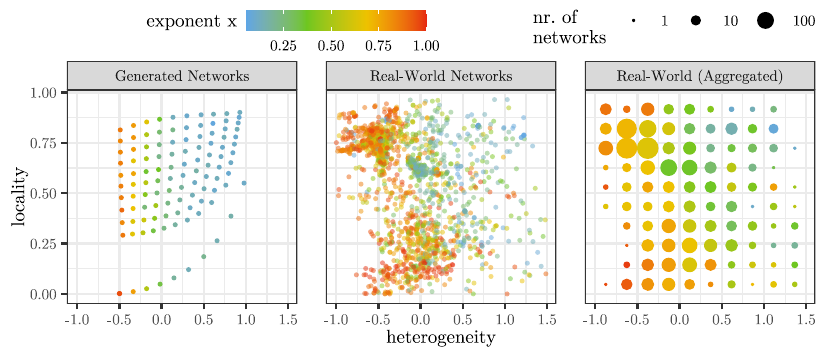}
    \caption{Results for the iFUB+hd algorithm; \num{17} real-world
      networks are excluded due to timeout.}
    \label{fig:diameter-ifub-hd}
  \end{subfigure}
  \begin{subfigure}{\textwidth}
    \centering
    \includegraphics{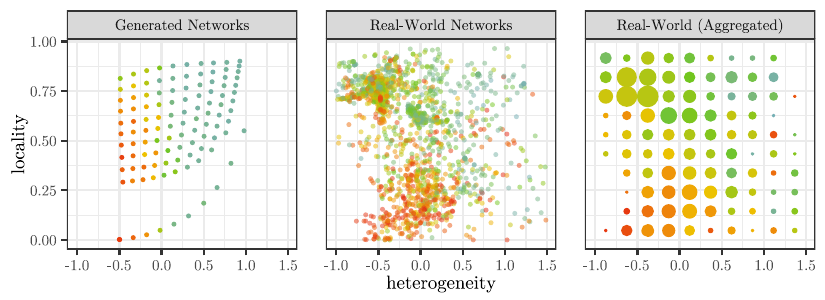}
    \caption{Results for the iFUB+4-sweephd algorithm; \num{16}
      real-world networks are excluded due to timeout.}
    \label{fig:diameter-ifub-foursweep}
  \end{subfigure}
  \begin{subfigure}{\textwidth}
    \hspace{1pt}
    \includegraphics{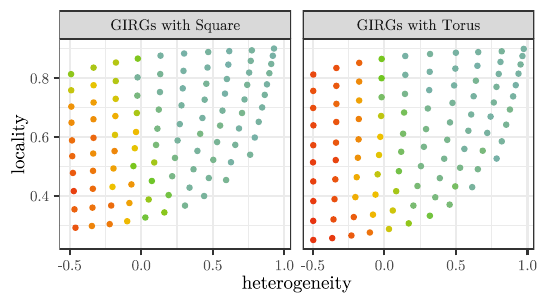}
    \caption{Comparison between torus and square as GIRG ground space
      for the iFUB+4-sweephd algorithm.}
    \label{fig:diameter-ifub-foursweep-torus}
  \end{subfigure}
  \caption{The exponent $x$ of the number of BFSs $c = n^x$ of the
    iFUB algorithms.  Different from the rest of the paper, the GIRG
    ground space is a square instead of a torus.}
  \label{fig:diameter-ifub}
\end{figure}

\subparagraph{Impact of Locality and Heterogeneity.}

The general dependence is the same for generated and real-world
networks.  For iFUB+hd (Figure~\ref{fig:diameter-ifub-hd}), an almost
linear number of BFSs is required for networks lacking heterogeneity.
On heterogeneous networks, i.e., if there are vertices of high degree,
the number of BFSs is substantially sublinear.  The more sophisticated
iFUB+4-sweephd variant (Figure~\ref{fig:diameter-ifub-foursweep})
additionally performs well on homogeneous networks with locality.
Observe that the picture for real-world networks shows some noise,
which indicates that there are properties besides locality and
heterogeneity that impact the performance.  We note, however, that
this observation agrees with the models, where we get highly varying
exponents for individual parameter settings, e.g., for iFUB+4-sweephd
on the five GIRGs with power-law exponent $\beta = 3.35$ and
temperature $T = 0.62$, we get exponents ranging from \num{0.18} to
\num{0.77} for the five generated instances.

\subparagraph{Impact of the Ground Space: Torus vs.\ Square.}

As mentioned in Section~\ref{sec:random-networks}, the default ground
space for the GIRG model is a torus.  Though this might seem less
natural than using a unit square, using the torus often simplifies
theoretical analysis while not making a big difference otherwise.
However, in the particular case of the iFUB+4-sweephd algorithm, the
ground space makes a significant difference.
Figure~\ref{fig:diameter-ifub-foursweep-torus} compares the efficiency
of iFUB+4-sweephd between torus-GIRGs and square-GIRGs.

Observe that the difference mainly shows for homogeneous and local
networks (top-left corner), where iFUB+4-sweephd requires an almost
linear number of BFSs on torus-GIRGs, while being sublinear for
square-GIRGs (as we have already observed in
Figure~\ref{fig:diameter-ifub-foursweep}).  This difference can be
explained as follows.  For square-GIRGs, the 4-sweep can find a vertex
in the center of the ground space, which has relatively low distance
to many vertices.  As mentioned above, starting iFUB from such a
vertex lets us prune the search early.  In a torus, however, all
points are identical and thus homogeneous torus-GIRGs do not have
central vertices, rendering the iFUB approach ineffective.

As mentioned above, iFUB+4-sweephd performs well on most homogeneous
and local real-world networks
(Figure~\ref{fig:diameter-ifub-foursweep}).  This indicates that
square-GIRGs are a better representation for these networks than
torus-GIRGs.  There are, however, exceptions.  Three such cases are
the local and homogeneous networks \texttt{man\_5976}, \texttt{tube1},
and \texttt{barth4} with exponents of \num{0.894}, \num{0.873}, and
\num{0.89} respectively.  Interestingly, all three graphs exhibit
torus-like structures in the sense that they ``wrap around'' in some
way, which obstructs the existence of a central vertex; see
Figure~\ref{fig:graph-drawings-diameter}.  Thus, also in this case,
the predictions of the models match the real-world behavior.

\begin{figure}[t!]
  \centering
  \includegraphics[width=0.3\linewidth]{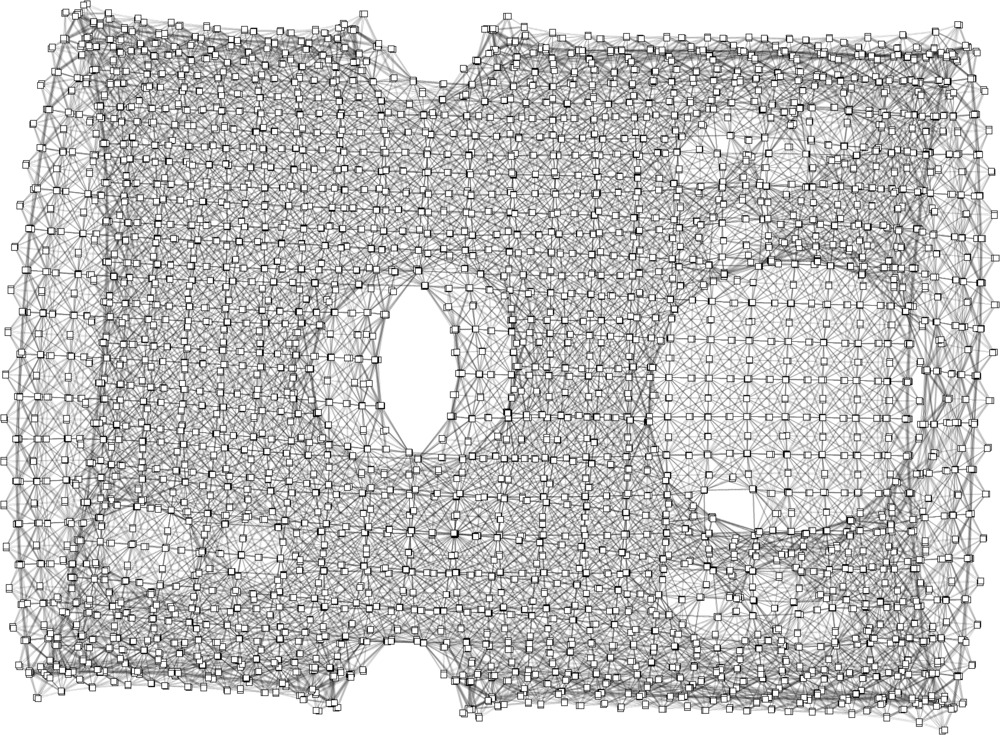}%
  \hfill%
  \includegraphics[width=0.3\linewidth]{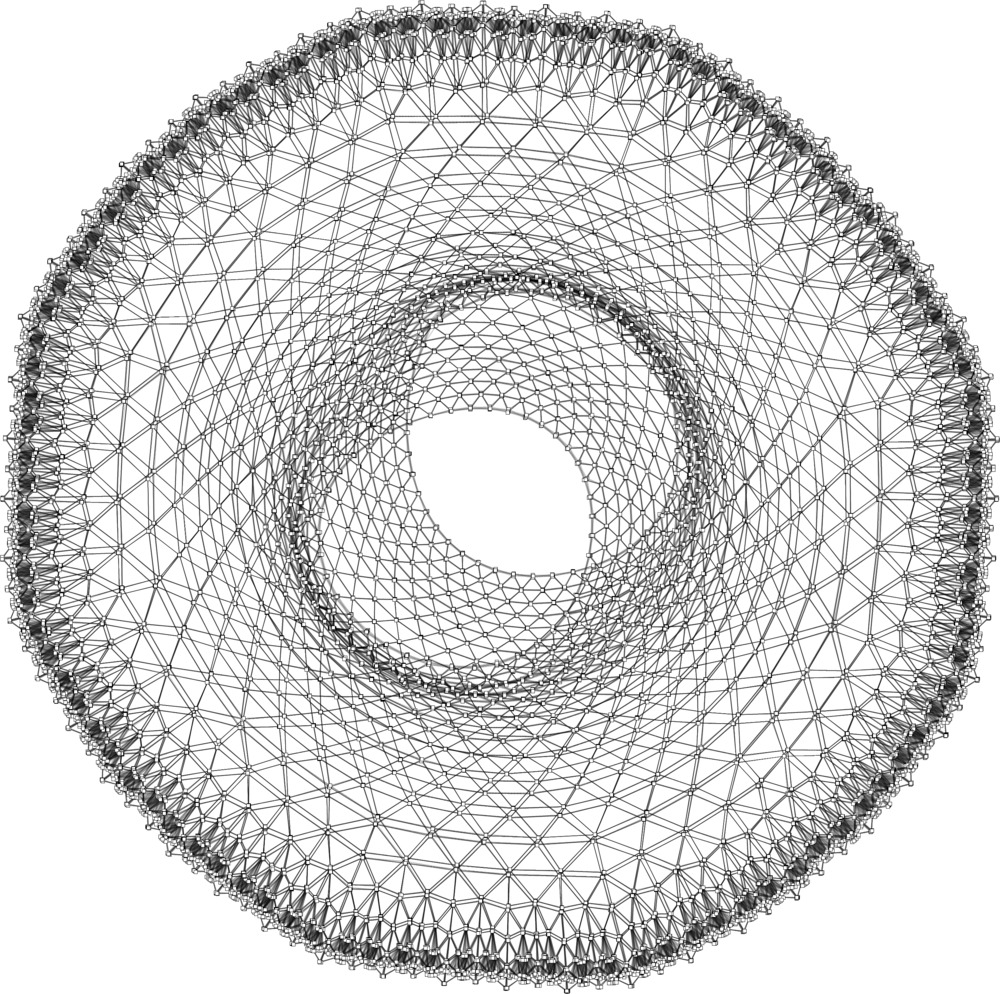}%
  \hfill%
  \includegraphics[width=0.3\linewidth]{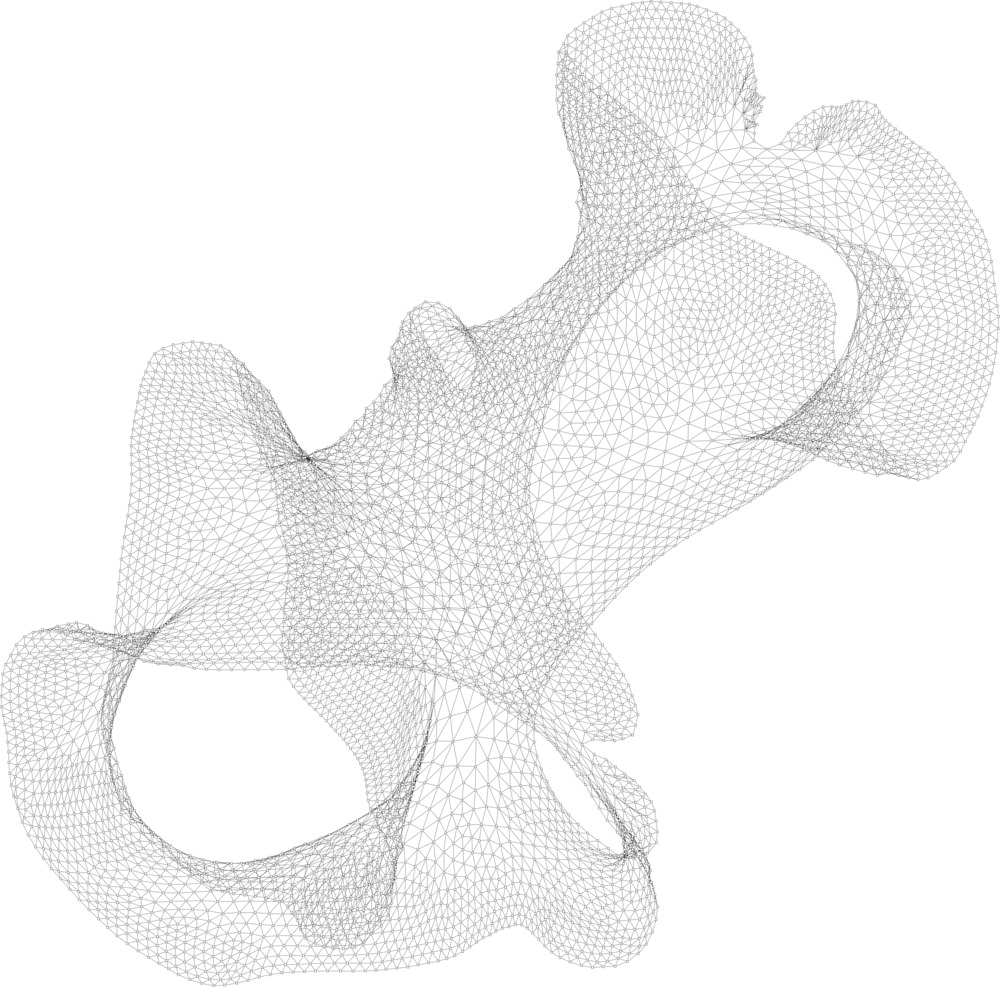}%
  \caption{Graph drawings of the real-world networks
    \texttt{man\_5976}, \texttt{tube1}, and \texttt{barth4} (from left
    to right).  The graphs were drawn with the OGDF
    implementation~\cite{Open_Graph_Drawi_Frame_other2013} of the
    $\text{FM}^3$ algorithm~\cite{Drawi_Large_Graph_with_GD2004}.}
  \label{fig:graph-drawings-diameter}
\end{figure}

\subparagraph{4-Sweep Lower Bound.}

Diameter approximation methods such as the double sweep lower bound
are known to perform well on real-world
graphs~\cite{Findi_Diame_RealW_Graph_ESA2010,
  Fast_compu_empir_tight_jour2008} and heterogeneous graph
models~\cite{Axiom_Avera_Analy_Algor_SODA2017}.  As our experiments
involve doing a 4-sweep anyways, we report the quality of the
resulting 4-sweep lower bounds on our data set of real-world networks.
We know the exact diameter for \num{2726} real-world networks (no
timeout in iFUB+hd or iFUB+4-sweephd).  The 4-sweep lower bound
matched the diameter for \num{2218} (\SI{81}{\%}) of these networks.
For \num{2703} (\SI{99}{\%}) networks, the difference between the
lower bound and the exact diameter is at most \num{2}.

\subparagraph{Discussion.}

The problem of computing the diameter of a graph is closely related to
the all pairs shortest path (APSP) problem, i.e., computing the
distance between all pairs of points.  It is an open problem whether
one can compute the diameter of arbitrary graphs faster than
APSP~\cite{c-dogopanr-87}.  The best known algorithms for APSP run in
time $O(n^\omega)$ with
$\omega<\num{2.38}$~\cite{Refin_Laser_Metho_Faste_SODA2021} or
$O(nm)$.  Both running times are infeasible for large real-world
networks.

In practice, however, Crescenzi, Grossi, Habib, Lanzi, and
Marino~\cite{compu_diame_realw_undir_jour2013} introduced the iFUB
algorithm and demonstrate that it performs much better on most
real-world networks than the worst case suggests.  They note that it
works particularly well on networks with high difference between
radius and diameter.  This corresponds to the existence of a central
vertex with eccentricity close to half the diameter.  Our experiments
confirm this and provide the following more detailed picture.  First,
in heterogeneous networks, the high degree nodes serve as central
vertices.  The results of Borassi, Crescenzi, and
Trevisan~\cite{Axiom_Avera_Analy_Algor_SODA2017} provide a theoretical
foundation for this phenomenon.  They in particular show that
heterogeneous Chung--Lu graphs allow the computation of the diameter
in sub-quadratic time, indeed using a vertex of high degree as central
vertex.  Second, for homogeneous networks, locality facilitates the
existence of a central vertex, unless the graph exhibits a toroidal
structure.

Our results point to some practical as well as theoretical open
questions.  The highest potential for practical improvements can be
achieved by focusing on homogeneous graphs without locality or with a
toroidal underlying geometry.  Erdős--Rényi graphs or random geometric
graphs with a torus as underlying geometry may help guide the
development of such an algorithm.  Moreover, it would be interesting
to strengthen the theoretical foundation by complementing the above
results by Borassi et al.~\cite{Axiom_Avera_Analy_Algor_SODA2017} on
Chung--Lu graphs to network models exhibiting locality (homogeneous as
well as heterogeneous).

\subsection{Vertex Cover Domination}
\label{sec:vert-cover-domin}

A vertex set $S \subseteq V$ is a \emph{vertex cover} if every edge
has an endpoint in $S$, i.e., removing $S$ from $G$ leaves a set of
isolated vertices.  We are interested in finding a vertex cover of
minimum size.  For two adjacent vertices $u, v \in V$, we say that $u$
\emph{dominates} $v$ if $N[v] \subseteq N[u]$.  The \emph{dominance
  rule} states that there exists a minimum vertex cover that includes
$u$.  Thus, one can reduce the instance by including $u$ in the vertex
cover and removing it from the graph.

To evaluate the effectiveness of the dominance rule, we apply it
exhaustively, i.e., until no dominant vertices are left.  Moreover, we
remove isolated vertices.  We refer to the number $c$ of vertices in
the largest connected component of the remaining instances as the
\emph{kernel size}.  Figure~\ref{fig:vertex-cover-domination} shows
the \emph{relative kernel size} $c / n$ with respect to locality and
heterogeneity.

\begin{figure}
  \centering
  \includegraphics{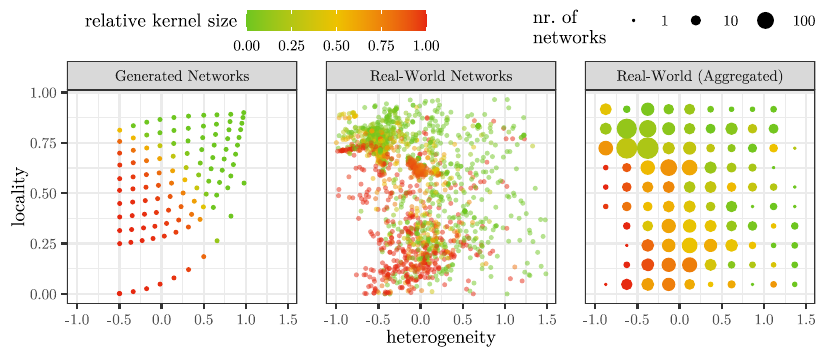}
  \caption{The relative kernel size of the vertex cover domination
    rule.}
  \label{fig:vertex-cover-domination}
\end{figure}

\subparagraph{Impact of Locality and Heterogeneity.}

We see a sharp separation for the generated networks.  For low
locality and heterogeneity (bottom left), the reduction rule cannot be
applied.  For high locality and heterogeneity (top right), the
dominance rule completely solves the instance.
For real-world networks, the separation is less sharp, i.e., there is
a larger range of locality/heterogeneity values in the middle where
the dominance rule is effective sometimes.  Nonetheless, we see the
same trend that the reduction rule is more likely to be effective the
higher the locality and heterogeneity.  In the extreme regimes (bottom
left or top right), we observe the same behavior as for the generated
networks with relative kernel sizes close to $1$ and $0$,
respectively, for almost all networks.  Moreover, there is dichotomy
in the sense that many instances are either (almost) completely solved
by the dominance rule or the rule is basically inapplicable.
In fact, \SI{30.9}{\%} of the real-world networks are reduced to below
\SI{5}{\%} of their original size, while \SI{16.3}{\%} are reduced
by less than \SI{5}{\%}.

\subparagraph{Discussion.}

Though vertex cover is NP-hard~\cite{Reduc_Among_Combi_Probl_CCC1972},
it is rather approachable: It can be solved in
$\num{1.1996}^nn^{O(1)}$ time~\cite{Exact_algor_maxim_indep_jour2017}
and there is a multitude of FPT-algorithms with respect to the
solution size $k$~\cite{Param_Algor_other2015}, the fastest one
running in
$O(\num{1.2738}^k + kn)$~\cite{Impro_upper_bound_verte_jour2010}.
This basis was improved by Harris and Narayanaswamy in a recent
preprint~\cite{hn-favcpss-22} stating a running time of
$O^*(\num{1.25400}^k)$ where $O^*$ suppresses polynomial factors.
Moreover, there are good practical
algorithms~\cite{Branc_expon_algor_pract_jour2016,w-ctspdr-98}.  Both
algorithms apply a suite of reduction rules, including the dominance
rule or a generalization.  We note that the dominance rule is closely
related to Weihe's reduction rules for hitting set~\cite{w-ctspdr-98}.
Our previous experiments for Weihe's reduction rules match our
results: they work well if the instances are local and
heterogeneous~\cite{Under_Effec_Data_Reduc_WAW2019}.
Concerning theoretic analysis on models, we know that on hyperbolic
random graphs, the dominance rule is sufficiently effective to yield a
polynomial time algorithm~\cite{bffk-svcpthrg-21}.
Thus, it is not surprising that the top-right corner in
Figure~\ref{fig:vertex-cover-domination} is mostly green.

We see two main directions for future research.  First, concerning
the dominance rule, we have seen that heterogeneity and locality are
a good predictor for the effectiveness in the extreme regimes.
However, there is a regime of moderate heterogeneity and locality
where we see a mix of high and low effectiveness.  This indicates
that other properties are the deciding factor for these instances
and it would be interesting to figure out these properties.  The
second direction is to investigate the effectiveness of other
techniques for solving vertex cover depending on the network
properties.  This includes the study of techniques used in existing
solvers~\cite{Branc_expon_algor_pract_jour2016,w-ctspdr-98} but also
the development of new techniques that are tailored towards less
local or less heterogeneous instances.  Specifically the algorithm
of Akiba and Iwata~\cite{Branc_expon_algor_pract_jour2016} performs
well on social networks (high locality and heterogeneity) but fails
on road networks (low heterogeneity).  Thus, techniques tailored
towards solving homogeneous instances could lead to an algorithm
that is more efficient on a wider range of instances.

\subsection{The Louvain Algorithm for Graph Clustering}
\label{sec:graph-clust-louv}

Let $V_1 \cupdot \dots \cupdot V_k = V$ be a \emph{clustering} where
each vertex set $V_i$ is a \emph{cluster}.  One is usually interested
in finding clusterings with dense clusters and few edges between
clusters, which is formalized using some quality measure, e.g., the
so-called modularity.  A common subroutine in clustering algorithms is
to apply the following local search.  Start with every vertex in its
own cluster.  Then, check for each vertex $v \in V$ whether moving $v$
into a neighboring cluster improves the clustering.  If so, $v$ is
moved into the cluster yielding the biggest improvement.  This is
iterated until no improvement can be achieved.  Doing this with the
modularity as quality measure (and subsequently repeating it after
contracting clusters) yields the well-known Louvain
algorithm~\cite{bgll-fucln-08}.

The run time of the Louvain algorithm is dominated by the number of
iterations of the initial local search.  Figure~\ref{fig:louvain}
shows this number of iterations.  It excludes four real-world networks
with more than \SI{1}{k} iterations (\num{1403}, \num{1403},
\num{5003}, and \num{19983}).

\begin{figure}
  \centering
  \includegraphics{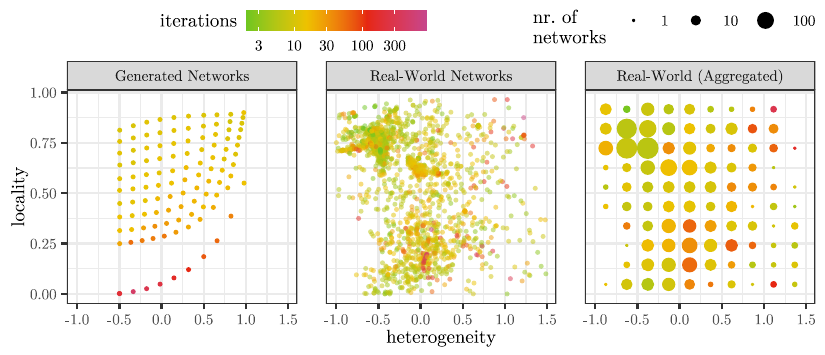}
  \caption{Number of iterations of the first local search of the
    Louvain algorithm.  The color scale is logarithmic.  Four outliers (real-world)
    with more than \SI{1}{k} iterations are excluded.}
  \label{fig:louvain}
\end{figure}

\subparagraph{Impact of Locality and Heterogeneity.}

For the generated instances, the number of iterations is generally low
but increasing when decreasing the locality.  To discuss this in more
detail, recall that each dot in the left plot represents the average
of five networks randomly generated with the same set of parameters.
Moreover, recall that the bottom-left dot represents Erdős--Rényi
graphs, the rest of the bottom row represents Chung--Lu graphs and all
other points represent GIRGs.

For GIRGs, the average number of iterations ranges from \num{9.4} to
\num{55.4} for the different parameter configurations.  Moreover, the
number of iterations starts to rise only for low localities.  For
\SI{83}{\%} of the configurations the average number of iterations
lies below \num{20}.  For the Erdős--Rényi graphs we obtain an average
of \num{196.4} iterations and for Chung--Lu graphs it goes up to
\num{443.40} for one configuration.  However, these average values
over five generated instances have to be taken with a grain of salt as
there is a rather high variance, e.g., the number of iterations for
the five Chung--Lu graphs with power-law exponent \num{25} ranges from
\num{104} up to \num{333}.

For the real-world networks there is no clear trend depending on
locality or heterogeneity.  In general, the number of iterations is
rather low except for some outliers.  While the strongest outlier
requires almost \SI{20}{k} iterations, \SI{98.6}{\%} of the networks
have at most \num{100} iterations.

\subparagraph{Discussion.}

The Louvain algorithm has been introduced by Blondel, Guillaume,
Lambiotte, and Lefebvre \cite{bgll-fucln-08}.  Discussing the plethora
of work building on the Louvain algorithm is beyond the scope of this
paper.  Concerning its running time, the worst-case number of
iterations of the Louvain algorithm can be upper bounded by $O(m^2)$
due to the fact that the modularity is bounded and each vertex move
improves it by at least $\Omega(1 / m^2)$.  Moreover, there exists a
graph family that requires $\Omega(n)$ iterations for the first local
search~\cite[Proposition~3.1]{Engin_Graph_Clust_Algor_other2015}.
Note that a linear number of iterations leads to a quadratic running
time, which is prohibitive for larger networks.

Our experiments indicate that locality or heterogeneity are not the
properties that discriminate between easy and hard instances.  For
generated instances, there is the trend that low locality increases
the number of iterations, which does not transfer to the real-world
networks (or is at least less clear).  However, the general picture
that most instances require few iterations while there are some
outliers coincides for generated and real-world networks.

We believe that the observed behavior can be explained as follows.
There are two factors that increase the number of iterations.  First,
one can craft very specific situations in which individual vertices
move into a cluster, trigger a change there and then continue into
another cluster.  These specific structures are unlikely to appear in
practice as well as in random instances, which is why we observe the
overall low number of iterations in our experiments.  The second
factor increasing the number of iterations is a lack of clear
community structure.  A vertex that fits more or less equally well to
different communities is likely to swap its cluster more often when
the clustering slightly changes during the local search.  This is why
we see an increased number of iterations on the generated networks
with low locality.  This interpretation is also supported by the fact
that Louvain performs provably well on the stochastic block
model~\cite{Power_Louva_Stoch_Block_NeurIPS2020}; a network model with
a very clear community structure of non-overlapping clusters.

For future research, we believe it is highly interesting to support
our above interpretation with theoretical evidence.  In particular, it
would be interesting to understand how the local search behaves on a
model with underlying geometry, which facilitates structures of
overlapping communities.

\subsection{Maximal Cliques}
\label{sec:maximal-cliques}

A \emph{clique} is a subset of vertices $C \subseteq V$ such that $C$
induces a complete graph, i.e., any pair of vertices in $C$ is
connected.  A clique is \emph{maximal}, if it is not contained in any
other clique.  In the following, we are never interested in
non-maximal cliques.  Thus, whenever we use the term \emph{clique}, it
implicitly refers to a maximal clique.  To enumerate all cliques, we
used the algorithm by Eppstein, Löffler, and
Strash~\cite{Listi_Maxim_Cliqu_Spars_ISAAC2010}, using their
implementation~\cite{Listi_Maxim_Cliqu_Large_jour2013,Listi_Maxim_Cliqu_Large_SEA2011}.

As the cliques of a network can be enumerated in polynomial time per
clique~\cite{Algor_Gener_Maxim_Indep_jour1977}, the number of maximal
cliques is a good indicator for the hardness of an instance.  To our
surprise, the number of cliques does not exceed the number of edges
$m$ for all generated and most real-world networks.  Out of the
\num{2740} real-world networks, \num{2552} (\SI{93}{\%}) have at most
$m$ and \num{187} have more than $m$ cliques.  For the remaining
\num{1} network, the timeout of \SI{30}{min} was exceeded.
Figure~\ref{fig:maximal-cliques} shows the number of cliques relative
to $m$ for all networks with at most $m$ cliques.

\begin{figure}
  \centering
  \includegraphics{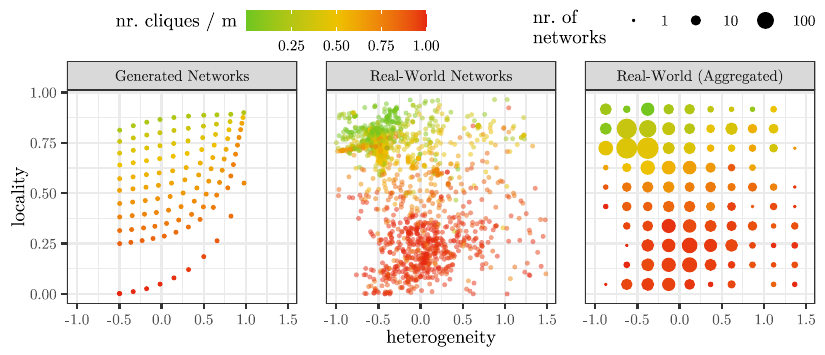}
  \caption{The number of maximal cliques relative to $m$ depending on
    the heterogeneity and locality, restricted to networks where this
    value is at most~\num{1} (\SI{93}{\%} of the networks).}
  \label{fig:maximal-cliques}
\end{figure}

\subparagraph{Impact of Locality and Heterogeneity.}

One can see that the networks (generated and real-world) with low
locality have roughly $m$ cliques, while the number of cliques
decreases for increasing locality.  Moreover, among networks with
locality, there is the slight trend that networks with higher
heterogeneity have more cliques.

It is not surprising that networks with low locality have roughly $m$
cliques, as graphs without triangles have exactly $m$ cliques.  For
graphs with higher locality, there are two effects counteracting each
other.  On the one hand, multiple edges combine into larger cliques,
which decreases the number of cliques.  On the other hand, each edge
can be contained in multiple cliques, which increases the number of
cliques.  Our experiments show that the former effect usually
supersedes the latter, i.e., the size of the cliques increases more
than the number of cliques each edge is contained in.

\subparagraph{Discussion.}

There are many results on enumerating cliques and on the complementary
problem of enumerating independent sets; discussing them all is beyond
the scope of this paper.  Here, we focus on discussing two results
that are closely related.  They are based on the degeneracy and the
so-called (weak) closure\footnote{For a formal definition of
  degeneracy and (weak) closure see the original papers or
  Section~\ref{sec:maximal-cliques-degen-closure}.} of a network.

The degeneracy is a robust measure for the sparsity of a network,
i.e., the degeneracy is low only if the graph does not include a dense
subgraph.  Eppstein, Löffler and
Strash~\cite{Listi_Maxim_Cliqu_Large_jour2013} give an algorithm for
enumerating all cliques that runs in $O(dn3^{d/3})$ time, where $d$ is
the degeneracy.  We use this algorithm for our experiments as it
performs very well in practice.

Fox, Roughgarden, Seshadhri, Wei, and
Wein~\cite{Findi_Cliqu_Socia_Netwo_jour2020} introduced the closure as
a measure that captures the tendency that common neighbors facilitate
direct connections, i.e., as a measure for locality.  Additionally,
they introduced the weak closure as a more robust measure.  Fox et
al.\ show that weakly $c$-closed graphs have at most
$3^{(c - 1)/3} n^2$ cliques.  For $c$-closed graphs they give the
additional upper bound of $4^{(c + 4)(c - 1) / 2} n^{2- 2^{1 - c}}$.
Together with an algorithm that takes polynomial time per
clique~\cite{Algor_Gener_Maxim_Indep_jour1977}, this shows that
enumerating all cliques is fixed-parameter tractable with respect to
the (weak) closure.

Comparing this to our empirical observations, the number of cliques is
at most $m$ for \SI{93}{\%} of real-world networks, while the
theoretical results give exponential bounds on how much bigger than
$m$ the number cliques can be.  Nonetheless, qualitatively speaking,
the bounds for degeneracy and closure show that sparsity and locality
lead to a low number of cliques.  This matches our observations on
real-world networks, i.e., most of them are sparse and have few
cliques and the number of cliques decreases with increasing locality.
Unfortunately, this interpretation does not withstand a closer look.
To study how indicative the degeneracy and (weak) closure actually are
for the number of cliques, we also computed\footnote{We give an
  algorithm for efficiently computing the weak closure in
  Section~\ref{sec:comp-weak-clos}.} these parameters for the networks
in our dataset.  We then study how the number of cliques depends on
these parameters as well as the relation between the parameters.  As
this is somewhat beyond the core focus of the paper, we only mention
the main insights here.  The detailed results can be found in
Appendix~\ref{sec:maximal-cliques-ext}.

For the \SI{93}{\%} of networks with at most $m$ cliques, the closure
does not correlate with the number of cliques.  Even worse, the weak
closure and the degeneracy have a slight negative correlation, i.e., a
lower parameter corresponds to a higher number of cliques.  On the
remaining \SI{7}{\%}-networks, the picture is somewhat different.
While the closure is still a bad indicator for the number of cliques,
the number of cliques is highly correlated with the degeneracy and the
weak closure.  Thus, degeneracy and weak closure provide a good
measure of hardness at least for the few somewhat hard instances with
more than $m$ cliques.  However, we note that degeneracy and weak
closure are almost identical on these networks.  Thus, weak closure is
mostly a measure of sparsity rather than of locality.

To summarize, although the existing theoretical results provide
valuable insights, they cannot explain the observed behavior on the
majority of the networks.  While providing strong upper bounds, these
parameterized results suffer from the same issue that a worst-case
analysis often has.  The worst case is usually harder than the typical
case and this does not change by restricting the set of considered
instances to those with small degeneracy or closure.  In contrast, our
experiments show that the network models yield surprisingly accurate
predictions.  They predict that one should generally expect sparse
networks to have at most $m$ cliques and that the number of cliques is
lower for networks with high locality.  Moreover, even the slight
increase with increasing heterogeneity is given for the generated as
well as for the real-world networks.  For future research we believe
that it is highly interesting to complement our empirical findings
theoretically by proving that the expected number of cliques is below
$m$.  Moreover, our findings reopen the question of finding
deterministic parameters that allow to give strong bounds on the
number of cliques while capturing most real-world networks.

\subsection{Chromatic Number}
\label{sec:chromatic-number}

The \emph{chromatic number} $\chi$ of $G$ is the smallest number of
colors one can use to color the vertices of $G$ without assigning
adjacent vertices the same color.  The size of any clique in $G$ is a
lower bound on the chromatic number.  Thus, for the \emph{clique
  number} $\omega$ of $G$, which is the size of the maximum clique in
$G$, we have $\chi \ge \omega$.  To compute $\chi$, it is safe to
reduce $G$ to its $\omega$-core, i.e., to iteratively remove all
vertices with degree less than $\omega$.  After coloring the resulting
\emph{kernel} optimally, one can always color the previously removed
vertices with at most $\omega$
colors~\cite{Solvi_Maxim_Cliqu_Verte_jour2015}.

The size of the kernel after applying this reduction is a good
indicator for the hardness of an instance.  For each network in our
data set, we first compute the clique number $\omega$ by enumerating
all maximal cliques as described in Section~\ref{sec:maximal-cliques}.
We then compute the $\omega$-core to obtain the kernel.  Let $c$ be
the number of vertices of the kernel.
Figure~\ref{fig:coloring-low-deg} shows the exponent $x$ for the
kernel size $c = n^x$, excluding \num{1} network where the clique
computation exceeded the timeout of \num{30}{min}.  As
we observed that the average degree of the generated instances has a
substantial impact on the kernel size, we additionally consider
generated instances with average degree \num{20} (instead of the usual
average degree of \num{10}).

\begin{figure}[t]
  \centering
  \begin{subfigure}{\textwidth}
    \centering
    \includegraphics{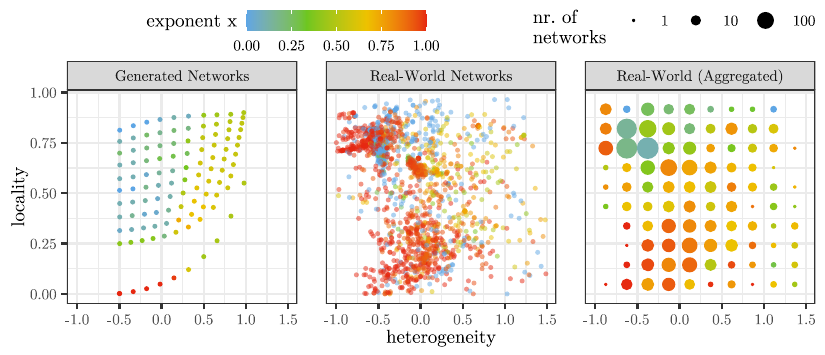}
    \caption{Dependence of the exponent $x$ on locality and
      heterogeneity.  The average degree for the generated networks is
      \num{10}.}
    \label{fig:coloring-low-deg}
  \end{subfigure}
  \begin{subfigure}{0.375\textwidth}
    \centering
    \includegraphics{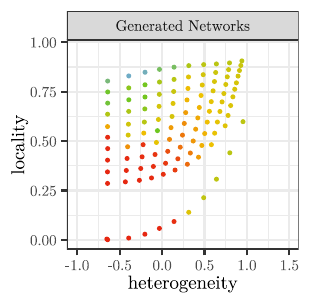}
    \caption{The same plot as in (\subref{fig:coloring-low-deg}) but
      with average degree \num{20} instead of \num{10}.}
    \label{fig:coloring-high-deg}
  \end{subfigure}
  \hfill
  \begin{subfigure}{0.55\textwidth}
    \centering
    \includegraphics{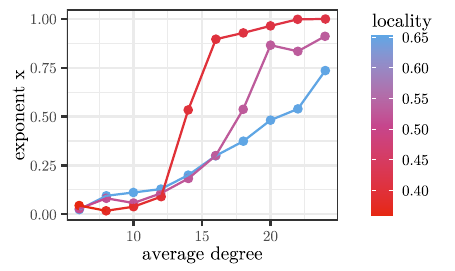}
    \caption{The exponent for GIRGs generated with $\beta = 5.1$ and
      $T\in \{0.62, 0.76, 0.88\}$ depending on the average degree.
      Each dot represents \num{30} networks.}
    \label{fig:chromatic-girg-scaled}
  \end{subfigure}
  \caption{The exponent $x$ of the kernel size $c = n^x$ after
    applying the chromatic number reduction rule.}
  \label{fig:chromatic-number}
\end{figure}

\subparagraph{Impact of Locality and Heterogeneity.}

For the generated networks, one can make three main observations.
First, graphs with higher locality have smaller kernels.  Second, for
highly heterogeneous networks, the kernel size is always moderate
independent of the locality.  Third, the average degree has a strong
impact on the applicability of the reduction rule, with much smaller
kernels for lower average degrees.

Many real-world instances are either heavily reduced or the reduction
rule is almost inapplicable, yielding a very noisy picture.  The
noisiness indicates that locality and heterogeneity are not the only
relevant factors for the kernel size.  However, the overall trend is
similar to the generated instances.  Moreover, the noisiness is also
present in the network models: The kernel size varies strongly with
the average degree but even for fixed average degree we observe high
variance, e.g., the exponents we get for the five generated GIRGs with
average degree \num{20}, power-law exponent \num{5.1} and temperature
\num{0.76} are \num{0.00}, \num{0.00}, \num{0.99}, \num{0.99}, and
\num{0.99}.

In the following, we first discuss the impact of the average degree in
more detail and then offer a potential explanation for the observed
variance in kernel size.

\subparagraph{Average Degree of Generated Instances.}

As mentioned above, the comparison of
Figures~\ref{fig:coloring-low-deg} and \ref{fig:coloring-high-deg}
shows that a higher average degree leads to larger kernels and thus
harder instances.  This dependence on the average degree can be seen
in more detail in Figure~\ref{fig:chromatic-girg-scaled}.  It shows
the exponent (averaged over 30 sampled instances) for GIRGs with three
parameter settings depending on the average degree.  For GIRGs with
high temperature (low locality) we see a threshold behavior going from
small kernels for low average degree to large kernels for high average
degree.  For lower temperatures (high locality), the transitions
happens later and less steeply.  This fits to the previously observed
separation between difficult non-local and easy local instances, which
moves up to higher locality values for an increasing average degree.

\subparagraph{Variance in the Kernel Size -- Clique Number and
  Degeneracy.}

Studying the clique number in comparison to the degeneracy yields a
potential explanation for the observed variance in the kernel size;
see Figure~\ref{fig:max-clique-size-and-degen}.

Recall that we use the clique number $\omega$ as lower bound for the
chromatic number.  Thus, a larger $\omega$ yields a better lower
bound, implying that the reduction rule can remove vertices of higher
degree.  However, the size of the kernel still depends on the
structure of the remaining graph, i.e., on whether there are
sufficiently many low-degree vertices that can be removed.  This
property is captured by the degeneracy $d$ of the graph.  In
particular, note that the degeneracy is lower-bounded by $\omega - 1$.
If the degeneracy is exactly $\omega - 1$, then the reduction rule
completely solves the instance.  If the degeneracy is higher, the
kernel can be very large.

\begin{figure}[t!]
  \centering
    \begin{subfigure}{\textwidth}
    \centering
    \includegraphics{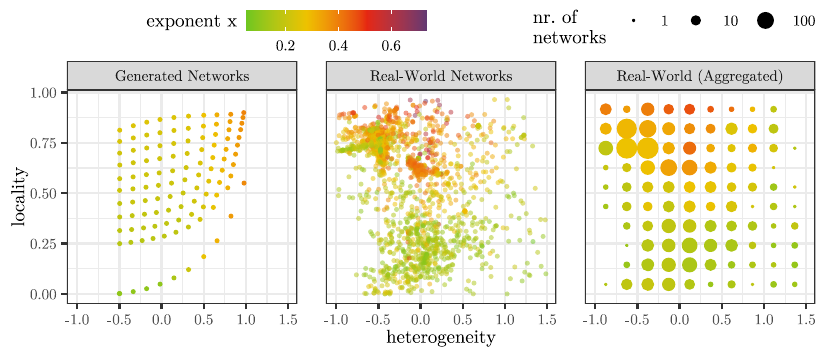}
    \caption{The exponent $x$ of the clique number $\omega = n^x$
      depending on locality and heterogeneity.}
    \label{fig:max-clique-size-and-degen-clique}
  \end{subfigure}
  \begin{subfigure}{\textwidth}
    \centering
    \includegraphics{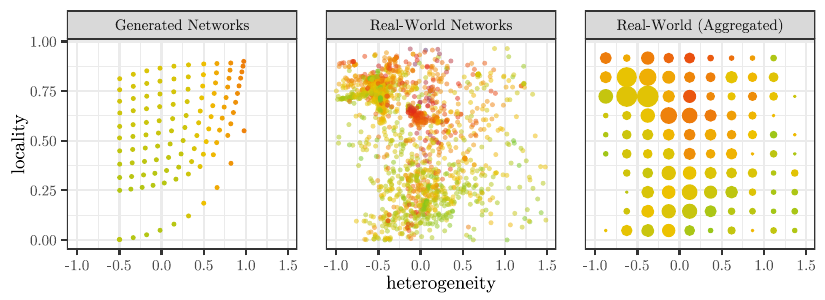}
    \caption{The exponent $x$ of the degeneracy $d = n^x$ depending on
      locality and heterogeneity.}
    \label{fig:max-clique-size-and-degen-degen}
  \end{subfigure}
  \caption{Comparison between the clique number and the degeneracy.}
  \label{fig:max-clique-size-and-degen}
\end{figure}

In Figure~\ref{fig:max-clique-size-and-degen}, we can make multiple
observations.  Locality and heterogeneity seem to be the core factors
impacting the clique size and the degeneracy, and the dependence is
similar for generated and real-world networks.  However, the
dependence is almost identical for clique size and degeneracy.  Thus,
whether $d = \omega - 1$, which yields an empty kernel or
$d > \omega - 1$, yielding a non-empty kernel, often comes down to a
property other than locality and heterogeneity, which explains the
high variance observed above.  Also observe how the difference between
clique size and degeneracy is slightly higher for less local graphs,
which fits to the previous observation that non-local graphs tend to
have larger kernels.

\subparagraph{Discussion.}

The reduction rule we considered here has been described by Verma,
Buchanan, and Butenko~\cite{Solvi_Maxim_Cliqu_Verte_jour2015}, who
demonstrated that the clique number is a good lower bound for the
chromatic number on real-world instances.  Another lower bound method
based on degree-bounded independent sets performing very well in
practice has been introduced by Lin, Cai, Luo, and
Su~\cite{Reduc_based_Metho_Color_IJCAI2017}. Concerning the clique number, Walteros and Buchanan~\cite{Maxim_Cliqu_Often_Easy_jour2020} have shown that in many real-world networks, the \emph{clique-core gap} $g = d+1-\omega$ is small, and provide an $O(m^{1.5})$ algorithm for the clique number if this gap is constant. However, they leave a potential parameterization for the chromatic number based on the clique-core gap as an open problem.

Concerning the impact of locality and heterogeneity, our results
reveal a highly interesting effect.  We have two characteristics
(clique number and degeneracy), where locality and heterogeneity are
the crucial factors.  However, for an algorithm whose efficiency is
essentially based on the comparison of the two characteristics,
properties besides locality and heterogeneity have to tip the scale
for whether the algorithm performs well or not.  This is an
interesting insight as this effect might pose a threat to external
validity when considering more complicated algorithms based on
multiple structural properties.  In this specific case, however, the
behavior observed on real-world instances is nonetheless quite similar
to that on generated instances, including the high variance due to the
above effect.

For future research, it would be interesting to study the reduction
rule based on maximum cliques theoretically on the different network
models.  We believe that this can give a better understanding of what
properties tip the scale in case the degeneracy is higher than the
clique number, potentially leading to improved algorithms.  This can
also lead to interesting technical contributions like, e.g.,
understanding the potential threshold behavior when it comes to the
dependence on the average degree.

\section{Discussion and Conclusion}
\label{sec:disc-concl}

Networks from different domains have varying structural properties.
Thus, trying to find probability distributions that match or closely
approximate those of real-world networks seems like a hopeless
endeavor.  Moreover, even if we knew such a probability distribution,
it would most likely be highly domain specific and too complicated for
theoretical analysis.

\subparagraph{Our View on Average-Case Analysis.}

A more suitable approach to average case analysis is the use of models
that assume few specific structural properties and are as unbiased as
possible beyond that.  If the chosen properties are the dominant
factors for the algorithm's performance, we obtain external validity,
i.e., the results translate to real-world instances even though they
do not actually follow the assumed probability distributions.  There
are two levels of external validity.
\begin{itemize}
\item The models capture the performance-relevant properties
  sufficiently well that algorithms perform similar on generated and
  real-world networks.
\item The models are too much of an idealization for this direct
  translation to practical performance.  However, varying a certain
  structural property in this idealized world has the same qualitative
  effect on performance as it has on real-world networks.
\end{itemize}
Though an average case analysis cannot yield strong performance
guarantees, with the above notions of external validity, it can give
insights into what properties are crucial for performance (first
level) and how the performance depends on a property (second level).
Moreover, even a lack of external validity can yield valuable insights
in the following sense.  Assume we have the hypothesis that property
$X$ is the crucial factor for algorithmic performance and thus we
study a model with property $X$ as null model.  Then, a lack of
external validity lets us refute the hypothesis as there clearly has
to be a different property impacting the performance.

\subparagraph{Impact of Locality and Heterogeneity.}

Non-surprisingly, the performance on real-world networks depending on
locality and heterogeneity is more noisy compared to the generated
networks, as real-world networks are diverse and vary in more than
just these two properties.  That being said, the observations on the
models and in practice coincide almost perfectly for the
bidirectional search and the enumeration of maximal cliques.  For the
maximal cliques, the match even includes constant factors, which is
particularly surprising, as these numbers are below $m$ while the
worst case is exponential.

For the vertex cover domination rule as well as the iFUB algorithm, we
obtain a slightly noisier picture.  However, the overall trend matches
well, which indicates that locality and heterogeneity are crucial
factors for the performance, but not the only ones.  For the iFUB
algorithm, we already identified the existence of central vertices as
additional factor (difference between torus and square as underlying
geometry).

For the chromatic number, we observe that heterogeneity and locality
are important but not the only factors impacting performance.
Nonetheless, the performance is similar on the models compared to
real-world networks.

For the Louvain clustering algorithm, the models and real-world
networks coincide insofar, that the number of iterations is low, with
few exceptions.  This indicates that locality and heterogeneity are
not the core properties for differentiating between hard and easy
instances.

\subparagraph{Conclusions.}

Locality and heterogeneity have significant impact on the performance
of many algorithms.  We believe that it is useful for the design of
efficient algorithms to have these two dimensions of instance
variability in mind.  Moreover, GIRGs~\cite{Geome_inhom_rando_graph_jour2019}
with the available efficient
generator~\cite{Effic_Gener_Geome_Inhom_ESA2019} provide an abundance
of benchmark instances on networks with varying locality and
heterogeneity.  Finally, we believe that average case analyses on the
four extreme cases can help to theoretically underpin practical run
times.  The four extreme cases can, e.g., be modeled using geometric
random graphs for local plus homogeneous, hyperbolic random
graphs\footnote{GIRGs can be seen as common generalization of
  geometric and hyperbolic random graphs.} for local plus
heterogeneous, Erdős--Rényi graphs for non-local plus homogeneous, and
Chung--Lu graphs for non-local plus heterogeneous networks.

\bibliography{locality-heterogeneity-auto,locality-heterogeneity-manual}

\newpage
\appendix

\section{Appendix -- Overview}

The appendix provides additional details and findings that divert from
the core focus of the paper but are nonetheless worth mentioning.
Appendix~\ref{sec:locality-ext} contains additional details on
the locality measure.  In Appendix~\ref{sec:average-distance}, we
discuss how we approximate the average distance of a graph.
Appendix~\ref{sec:random-networks-details} contains some technical
details on the randomly generated networks.
Appendix~\ref{sec:maximal-cliques-ext} studies the number of maximal
cliques depending on the degeneracy and (weak) closure parameters.
Appendix~\ref{sec:extreme-heterogeneity-ext} contains figures including the real-world networks with extreme heterogeneity.

The following list highlights some findings contained in the appendix
that we believe are interesting in their own right.
\begin{itemize}
\item We describe how we can compute the locality of a graph.  This in
  particular involves computing the detour distance quickly, which we
  can do exactly, using the knowledge about the bidirectional search
  from Section~\ref{sec:bidirectional-search}.
\item To compute the locality, we also have to compute the average
  distance of a graph.  For this, we provide an improved version
  of a previously known approximation
  algorithm~\cite{Avera_Dista_Queri_throu_APPROX2015} based on
  weighted sampling.  Moreover, we observe that uniform sampling might
  be more efficient at least if the bidirectional search is fast.
\item We provide an efficient algorithm for computing the weak closure
  of a graph.
\item We give a detailed comparison of the weak closure with other
  measures (in particular with the degeneracy).
\end{itemize}

\section{Locality}
\label{sec:locality-ext}

This section contains an extended discussion of our measure of
locality.  In Section~\ref{sec:comp-with-clust-details}, we compare
locality with the more commonly known clustering coefficient.  In
Section~\ref{sec:limitations-locality-details}, we discuss limitations
of the measure.  In Section~\ref{sec:comp-edge-local-details}, we
describe how we approximate the locality.

\subsection{Comparison With the Clustering Coefficient}
\label{sec:comp-with-clust-details}

Let $G = (V, E)$ be a graph and let $v \in V$ be a vertex with
$\deg(v) > 1$.  Let further $G[N(v)]$ be the graph induced by $v$'s
neighborhood $N(v)$.  The \emph{local clustering coefficient} of $v$
is the density of $G[N(v)]$, i.e., the number of edges of $G[N(v)]$
divided by $\deg(v) \choose 2$.  Thus, the local clustering
coefficient of $v$ ranges between \num{0} if its neighborhood is an
independent set and \num{1} if it is a clique.  The \emph{average
  local clustering coefficient} (or just \emph{clustering coefficient}
for short) of $G$ is the average over all vertices with degree at
least \num{2}\punctuationfootnote{Note that the local clustering
  coefficient is undefined for vertices of lower degree as we divide
  by ${\deg(v) \choose 2} = 0$ if $\deg(v) \le 1$.}.

Note that, roughly speaking, the clustering coefficient is high if the
graph contains many triangles.  A triangle $uvw$ contributes the edge
$\{v, w\}$ to the neighborhood of $u$, thereby increasing its local
clustering coefficients (and symmetrically for $v$ and $w$).  This
shows the similarity to the degree locality we introduced in
Section~\ref{sec:definition-locality}.  There, the triangle $uvw$ makes
$w$ a common neighbor of the edge $\{u, v\}$, increasing the degree
locality of $\{u, v\}$ (and symmetrically for $\{v, w\}$ and
$\{w, u\}$).  It is thus not surprising that the clustering
coefficient and the degree locality of graphs behave very similar as
can be seen in the left plot of
Figure~\ref{fig:locality_heterogeneity_clustering_app} (which is the
same as Figure~\ref{fig:locality_heterogeneity_clustering} but
reappears here for better readability).

\begin{figure}
  \centering
  \includegraphics{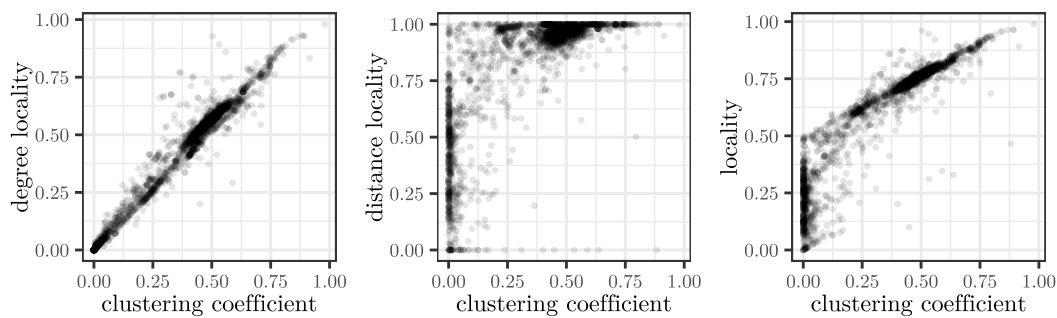}
  \caption{Comparison of the average local clustering coefficient to
    the degree locality (left), the distance locality (center), and
    the locality (right).  Each dot represents one network from our
    data set of real-world networks.}
  \label{fig:locality_heterogeneity_clustering_app}
\end{figure}

The main disadvantage of the clustering coefficient (and of the degree
locality for that matter) is that some types of graphs we would regard
as highly local get very small values.  An extreme case are bipartite
graphs that have no triangles and thus clustering coefficient and
degree locality \num{0}.  However, we would regard, e.g., grids as
highly local.  The goal of bringing the distance locality into the mix
was to get a more fine-grained view on these networks with few
triangles while keeping a measure similar to the clustering
coefficient for networks with many triangles.

In the middle plot of
Figure~\ref{fig:locality_heterogeneity_clustering_app}, we can see that
the networks with low clustering coefficients yield a wide range of
distance locality values.  Moreover, most networks with higher
clustering coefficient have distance locality close to~\num{1}.  This
should not come as a surprise as the distance locality of an edge is
\num{1} for every edge contained in a triangle.

Averaging the degree and distance locality yields the locality, which
is compared to the clustering coefficient in the right plot of
Figure~\ref{fig:locality_heterogeneity_clustering_app}.  This plot shows
that our measure of locality achieved what we hoped for.  For networks
with high clustering it behaves similar to the clustering coefficient,
but scaled to the interval $[0.5, 1]$.  Moreover, graphs with
clustering coefficient close to \num{0} are mapped to lower localities
in the $[0, 0.5]$ range, yielding a more find-grained perspective for
those networks.

Note that the above mentioned example of a grid has a locality close
to \num{0.5}.  If $G = (V, E)$ is a $\sqrt{n}\times\sqrt{n}$-grid,
then every edge $\{u, v\} \in E$ has detour distance \num{3} while the
average distance of non-adjacent vertex pairs grows with growing $n$.
This yields a distance locality that converges to \num{1} for
$n \to \infty$.

\subsection{Limitations}
\label{sec:limitations-locality-details}

Our definition of locality has three limitations.  We mentioned them
earlier in Section~\ref{sec:data-set} but want to discuss them here in more detail.

The first limitation is that the locality is not defined for trees.
In a tree, every edge is a bridge but the distance and edge locality
are only defined for non-bridge edges.  However, we do not see this as
a real issue as the concept of locality does not really make sense for
trees anyways.

The second limitation is that distance locality is not defined for
cliques.  In a clique, there are no non-edges and thus the looking at
the average non-edge distance does not make sense.

As already mentioned in Section~\ref{sec:data-set}, we
circumvent these first two limitations by removing trees and dense
graphs.  We note that trees are not very interesting in the context of
this paper and dense graphs are out of scope.

The third limitation is less clear-cut but also concerns the average
non-edge distance $\dist(\overline{E})$ in the definition of distance
locality.  As mentioned before, it can happen that
$\dist(\overline{E}) = 2$, e.g., for a wheel graph.  In this case we
formally defined $\distloc{\{u, v\}} = 0$.  However, there is also
already an issue if $\dist(\overline{E})$ is only slightly greater
than $2$.  In this case, the distance locality is rather volatile,
yielding negative values without much meaning.  This issue is
increased by the fact that we approximate the average distances.
Generally, the distance locality seems not the best measure for
locality in the regime where it is negative.

We circumvent this issue by capping the distance locality of graphs
at~$0$, as mentioned in Section~\ref{sec:definition-locality}.  We
note, however, that this happens only rarely.  Out of all networks,
the uncapped distance locality is negative for only \num{43} networks.
Moreover, the average non-edge distance is $2$ for \num{3} additional
networks.  In Table~\ref{tab:networks_with_neg_locality}, we list all
networks with average non-edge distance $2$ and all networks with
distance locality below \num{-0.1}.  Note that there are two types of
instances where this appears.  The instances at the top of the table
have low average distances, which makes the distance locality
non-robust.  The instances at the bottom are close to trees with few
long cycles, which yields large detour distances.

\begin{table}[t]
  \caption{The networks with (uncapped) distance locality below
    $\num{-0.1}$ or average non-edge distances~$2$.  The columns are
    the number of vertices $n$, number of edges $m$, the average
    distance $\dist(E \cup \overline{E})$, the average non-edge
    distance $\dist(\overline{E})$, the average detour distance
    $\dist^+(E')$ of non-bridges $E'$, the (uncapped) distance
    locality $\distloc{G}$ of $G$, and the degree locality
    $\degloc{G}$.}
  \label{tab:networks_with_neg_locality}
  \centering%
  \setlength{\tabcolsep}{0.75\tabcolsep}%
  \footnotesize%
\begin{tabular}{lrrrrrrr}
  \toprule
graph & $n$ & $m$ & $\dist(E\cup\overline{E})$ & $\dist(\overline{E})$ & $\dist^+(E')$ & $\distloc{G}$ & $\degloc{G}$ \\ 
  \midrule
Chebyshev1 & \num{261} & \num{1542} & \num{1.95} & \num{2.00} & \num{2.00} & \num{0.00} & \num{0.93} \\ 
  fs\_541\_1 & \num{541} & \num{2466} & \num{1.98} & \num{2.00} & \num{2.00} & \num{0.00} & \num{0.62} \\ 
  lp\_d6cube & \num{6184} & \num{37681} & \num{2.00} & \num{2.00} & \num{2.00} & \num{0.00} & \num{0.66} \\ 
  lp\_fit2d & \num{10524} & \num{129040} & \num{2.00} & \num{2.00} & \num{2.16} & \num{-116.88} & \num{0.15} \\ 
  bibd\_17\_8 & \num{24310} & \num{680232} & \num{2.00} & \num{2.00} & \num{2.01} & \num{-0.53} & \num{0.40} \\ 
  lp\_fit1d & \num{1049} & \num{13426} & \num{1.98} & \num{2.01} & \num{2.14} & \num{-17.96} & \num{0.13} \\ 
  arc130 & \num{130} & \num{715} & \num{1.98} & \num{2.07} & \num{2.15} & \num{-1.18} & \num{0.69} \\ 
  air02 & \num{6774} & \num{61529} & \num{2.19} & \num{2.19} & \num{2.35} & \num{-0.81} & \num{0.17} \\ 
  rt\_lolgop & \num{9765} & \num{10075} & \num{2.23} & \num{2.23} & \num{2.40} & \num{-0.78} & \num{0.54} \\ 
  nw14 & \num{123409} & \num{904906} & \num{2.31} & \num{2.31} & \num{2.94} & \num{-2.07} & \num{0.01} \\ 
  rt\_occupywallstnyc & \num{3609} & \num{3830} & \num{2.37} & \num{2.37} & \num{2.53} & \num{-0.42} & \num{0.41} \\ 
  blockqp1 & \num{60012} & \num{300011} & \num{2.44} & \num{2.44} & \num{2.67} & \num{-0.51} & \num{0.20} \\ 
  stat96v4 & \num{63076} & \num{491329} & \num{2.47} & \num{2.47} & \num{2.56} & \num{-0.18} & \num{0.13} \\ 
  bibd\_9\_3 & \num{84} & \num{249} & \num{2.56} & \num{2.68} & \num{2.86} & \num{-0.27} & \num{0.06} \\ 
  rt\_barackobama & \num{9631} & \num{9772} & \num{2.84} & \num{2.84} & \num{3.37} & \num{-0.64} & \num{0.15} \\ 
  n3c5-b4 & \num{252} & \num{1165} & \num{2.84} & \num{2.91} & \num{3.02} & \num{-0.12} & \num{0.01} \\ 
  rt\_onedirection & \num{7987} & \num{8100} & \num{2.91} & \num{2.91} & \num{3.25} & \num{-0.36} & \num{0.09} \\ 
  lp\_standmps & \num{1274} & \num{3878} & \num{2.98} & \num{2.99} & \num{3.10} & \num{-0.12} & \num{0.05} \\ 
  lp\_afiro & \num{51} & \num{100} & \num{2.88} & \num{3.04} & \num{3.16} & \num{-0.12} & \num{0.12} \\ 
  lp\_standata & \num{1274} & \num{3230} & \num{3.09} & \num{3.09} & \num{3.28} & \num{-0.17} & \num{0.00} \\ 
  primagaz & \num{10836} & \num{20116} & \num{3.27} & \num{3.27} & \num{3.77} & \num{-0.39} & \num{0.00} \\ 
  lp\_sc50a & \num{77} & \num{155} & \num{3.25} & \num{3.38} & \num{3.59} & \num{-0.16} & \num{0.01} \\ 
  lp\_sc50b & \num{76} & \num{143} & \num{3.38} & \num{3.51} & \num{3.78} & \num{-0.18} & \num{0.02} \\ 
  lpi\_ex72a & \num{215} & \num{463} & \num{3.94} & \num{4.00} & \num{4.24} & \num{-0.12} & \num{0.01} \\ 
  lpi\_woodinfe & \num{89} & \num{138} & \num{4.12} & \num{4.24} & \num{4.76} & \num{-0.24} & \num{0.11} \\ 
  ENZYMES123 & \num{90} & \num{127} & \num{6.13} & \num{6.29} & \num{7.00} & \num{-0.16} & \num{0.15} \\ 
  ENZYMES & \num{125} & \num{141} & \num{12.94} & \num{13.16} & \num{17.49} & \num{-0.39} & \num{0.01} \\ 
  NCI1 & \num{106} & \num{107} & \num{13.04} & \num{13.28} & \num{17.57} & \num{-0.38} & \num{0.00} \\ 
  FRANKENSTEIN & \num{214} & \num{217} & \num{22.25} & \num{22.46} & \num{40.08} & \num{-0.86} & \num{0.00} \\ 
  pivtol & \num{102} & \num{103} & \num{25.50} & \num{26.01} & \num{97.15} & \num{-2.96} & \num{0.02} \\ 
  odepa400 & \num{400} & \num{402} & \num{99.75} & \num{100.25} & \num{391.10} & \num{-2.96} & \num{0.01} \\ 
   \bottomrule
\end{tabular}

\end{table}

\subsection{Computing the Locality}
\label{sec:comp-edge-local-details}

In our implementation, we compute three aggregated values; the average
degree locality $\degloc{G}$ of the graph $G$, the average detour
distance $\dist^+(E')$ of all non-bridges $E'$, and the average
distance $\dist(E \cup \overline{E})$ of all vertex pairs (recall that
$\overline E = {V \choose 2} \setminus E$).  In the following, we
first discuss how we compute these values and then prove some lemmas
that show how these three values suffice to compute the locality
$\loc{G}$ of $G$.

Computing the degree locality $\degloc{G}$ in time
$O(\sum_{v \in V}(\deg(v))^2)$ is more or less straight forward (very
similar to computing the clustering coefficient).  This yields run
times that are feasible for all networks in our data set.

For the average detour distance $\dist^+(E')$ of all non-bridges $E'$,
we generally have to compute a shortest path for a linear number of
start--destination pairs.  Though this appears to require quadratic
running time, which would be infeasible, we can make use of the
following win--win situation.  We know that the bidirectional search
is sublinear unless the network is homogeneous and local; see
Section~\ref{sec:bidirectional-search}.  Moreover, if the network is
homogeneous and local, the detour distances are short and thus the
shortest path search can terminate after few steps.  Thus, just
running the bidirectional search for each non-bridge edge $E'$ is
efficient for all networks.  

The straight-forward way of computing the average distance
$\dist(E \cup \overline{E})$ of all vertex pairs is to run a BFS from
every vertex.  As this is infeasible for the larger networks, we
instead only approximate the average distance.  Details on that can be
found in Section~\ref{sec:average-distance}.

The following lemma states how we can compute the distance locality of
the graph from the average detour distance $\dist^+(E')$ of all
non-bridges $E'$ and the average distance $\dist(\overline{E})$ of
non-edges $\overline{E}$.

\begin{lemma}
  The average distance locality $\distloc{G}$ is
  \begin{equation*}
    \distloc{G} = \max\left\{ 1 - \frac{\dist^+(E') -
        2}{{\dist(\overline{E}) - 2}}, 0 \right\}.
  \end{equation*}
\end{lemma}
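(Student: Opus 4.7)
The plan is to unfold the definition of $\distloc{G}$ and apply linearity of averaging. Recall from Section~\ref{sec:definition-locality} that $\distloc{G}$ is defined as the maximum of $0$ and the average of $\distloc{e}$ taken over all non-bridge edges $e \in E'$, while for each such edge $e = \{u,v\}$ we have $\distloc{\{u,v\}} = 1 - (\dist^+(u,v) - 2)/(\dist(\overline{E}) - 2)$. Since only the distances $\dist^+(u,v)$ depend on $e$, everything except for these values can be pulled out of the averaging sum.

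Concretely, I would first handle the generic case $\dist(\overline{E}) > 2$. Here I compute
\begin{equation*}
  \frac{1}{|E'|}\sum_{\{u,v\} \in E'} \distloc{\{u,v\}}
  = 1 - \frac{1}{\dist(\overline{E}) - 2} \cdot \frac{1}{|E'|}\sum_{\{u,v\} \in E'} (\dist^+(u,v) - 2),
\end{equation*}
recognize that $\frac{1}{|E'|}\sum_{\{u,v\} \in E'} \dist^+(u,v) = \dist^+(E')$ by the very definition of the averaged detour distance from Section~\ref{sec:definition-locality}, and conclude that the inner average equals $1 - (\dist^+(E') - 2)/(\dist(\overline{E}) - 2)$. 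Taking the maximum with $0$ gives the claimed formula in this case.

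The only subtle point is the special case $\dist(\overline{E}) = 2$, which we defined separately as $\distloc{\{u,v\}} = 0$ for every non-bridge edge. I would argue this case matches the claimed formula as well: the denominator in the fraction on the right-hand side becomes $0$, so the expression is formally undefined, but the paper's convention (used implicitly when capping at $0$ is written this way) is to take the $\max$ to yield $0$ in agreement with the edge-wise definition. Alternatively one can observe that $\dist(\overline{E}) = 2$ forces every non-adjacent pair to have distance exactly $2$, so $\dist^+(u,v) \ge 2$ with equality only if $u,v$ share a neighbor; either way the limiting value of the bracketed expression as $\dist(\overline{E}) \to 2^+$ is non-positive, and the outer $\max$ yields $0$, consistent with the edge-wise definition.

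I do not anticipate a real obstacle here; the only care needed is the bookkeeping for the degenerate case $\dist(\overline{E}) = 2$, and making explicit that the right-hand side of the lemma is indeed well-defined (or interpreted as $0$) in that case.
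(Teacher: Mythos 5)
Your proposal matches the paper's proof: both unfold the definition of $\distloc{G}$, use linearity of the average over non-bridge edges to pull the constant denominator $\dist(\overline{E}) - 2$ out of the sum, and identify the resulting inner average with $\dist^+(E')$, with the outer $\max$ coming directly from the definition. Your extra care for the degenerate case $\dist(\overline{E}) = 2$ goes slightly beyond what the paper writes down, but it does not change the argument.
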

\begin{proof}
  First recall from Section~\ref{sec:definition-locality} that the
  maximum with $0$ is part of the definition of $\distloc{G}$.  Beyond
  that, we have to show that computing the distance locality of a
  single edge based on its detour distance commutes with taking the
  average over all non-bridge edges.  We obtain
  \begin{align*}
    \distloc{G}
    &= \frac{1}{m'} \sum_{\{u, v\} \in E'} \distloc{\{u, v\}}\\
    &= \frac{1}{m'} \sum_{\{u, v\} \in E'}
      \left(1 - \frac{\dist^+(u, v) - 2} {\dist(\overline E) -
      2}\right) \\ 
    &= 1 - \frac{1}{m'} \sum_{\{u, v\} \in E'}\frac{\dist^+(u, v) - 2}{\dist(\overline{E}) - 2}\\
    &= 1 - \frac{\frac{1}{m'} \sum_{\{u, v\} \in E'} \dist^+(u, v) - 2}{\dist(\overline{E}) - 2}\\
    &= 1 - \frac{\dist^+(E') - 2}{\dist(\overline{E}) - 2}.
  \end{align*}
\end{proof}

Finally, the following lemma shows how we can derive the average
distance $\dist(\overline{E})$ of non-edges $\overline{E}$ with
$\overline{m} = |\overline{E}|$ from the average distance
$\dist(E \cup \overline E)$ of all vertex pairs.

\begin{lemma}
  The average distance $\dist(\overline{E})$ between non-edges is
  \begin{equation*}
    \dist(\overline{E}) = \dist(E \cup \overline E) +
    \frac{m}{\overline{m}} (\dist(E \cup \overline E) - 1).
  \end{equation*}
\end{lemma}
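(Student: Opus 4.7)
The identity is purely an algebraic rearrangement of the defining sum for the average distance, separated according to whether a pair is an edge (distance $1$) or a non-edge. The plan is to express the total sum of pairwise distances in two ways and then solve for $\dist(\overline{E})$.

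First, I would write the sum of all pairwise distances as
\[
S \;=\; \sum_{\{u,v\} \in E \cup \overline{E}} \dist(u,v) \;=\; (m + \overline{m}) \cdot \dist(E \cup \overline{E}),
\]
using only the definition of $\dist(\cdot)$ as an average. Next, I would split the same sum according to adjacency:
\[
S \;=\; \sum_{\{u,v\} \in E} \dist(u,v) \;+\; \sum_{\{u,v\} \in \overline{E}} \dist(u,v) \;=\; m \;+\; \overline{m}\cdot \dist(\overline{E}),
\]
since every edge contributes distance exactly $1$.

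Equating the two expressions for $S$ gives $\overline{m}\cdot \dist(\overline{E}) = (m + \overline{m})\cdot \dist(E \cup \overline{E}) - m$. Dividing by $\overline{m}$ and regrouping yields
\[
\dist(\overline{E}) \;=\; \dist(E \cup \overline{E}) \;+\; \frac{m\cdot \dist(E \cup \overline{E}) - m}{\overline{m}} \;=\; \dist(E \cup \overline{E}) + \frac{m}{\overline{m}}\bigl(\dist(E \cup \overline{E}) - 1\bigr),
\]
which is exactly the claimed identity.

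There is no real obstacle here: the statement holds by a one-line decomposition of the total distance sum, and the only implicit assumption needed is that $\overline{E} \neq \emptyset$ (so $\overline{m} > 0$), which is already in force since distance locality is only defined in that case. The lemma therefore reduces the computation of $\dist(\overline{E})$ to the single quantity $\dist(E \cup \overline{E})$ (plus the known values $m$ and $\overline{m}$), which is what justifies approximating only the latter in the implementation.
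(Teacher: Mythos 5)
Your proof is correct and is essentially the same argument as the paper's: both decompose the total distance sum over all vertex pairs into the edge part (which contributes exactly $m$ since edges have distance $1$) and the non-edge part, and then rearrange. The only difference is presentational — you equate two expressions for the total sum $S$, while the paper adds and subtracts the edge sum inside a single chain of equalities.
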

\begin{proof}
  It holds that
  \begin{align*}
    \dist(\overline{E})
    &= \frac{1}{\overline{m}} \sum_{\{u, v\} \in \overline{E}} \dist(u, v)\\
    &= \frac{1}{\overline{m}} \left(
      \sum_{\{u, v\} \in \overline{E}} \dist(u, v)
      + \sum_{\{u, v\} \in E} \dist(u, v)
      - \sum_{\{u, v\} \in E} \dist(u, v) \right)\\
    &= \frac{1}{\overline{m}} \sum_{\{u, v\} \in E \cup \overline{E}} \dist(u, v)
      - \frac{1}{\overline{m}} \sum_{\{u, v\} \in E} \dist(u, v)\\
    &= \frac{m + \overline{m}}{\overline{m}}
      \dist(E \cup \overline{E}) - \frac{m}{\overline{m}}\\
    &= \dist(E \cup \overline{E}) +
      \frac{m}{\overline m} (\dist(E \cup \overline{E}) - 1).
  \end{align*}
\end{proof}

\section{Approximating Average Distances}
\label{sec:average-distance}

To approximate the average distance of a graph, we implemented the
algorithm by Chechik, Cohen, and
Kaplan~\cite{Avera_Dista_Queri_throu_APPROX2015}; with a small
improvement that is discussed in
Section~\ref{sec:avg-dist-improvement-details}.  For a parameter $k$,
it computes the BFS trees from roughly $k$ vertices.  We used
$k = \num{400}$ for all graphs.

To evaluate how good the approximation is, we selected eight networks
with roughly \SI{10}{k} vertices and ran the algorithm \num{50} times
on each of them; see Table~\ref{tab:avg_dist_comp}.  We measure the
quality of the approximation using the \emph{relative error}, which is
defined as follows.  Let $G$ be a graph with (exact) average distance
$d$ and let $d'$ be the approximation computed by the algorithm.  Then
the relative error is $|d - d'| / d$.  Table~\ref{tab:avg_dist_comp}
shows that the relative errors are low.  Also note that the networks
were selected to have varying values of locality and heterogeneity.
Specifically, the first two networks are local and homogeneous, the
next two are local and heterogeneous, the fifth and sixth are
non-local and homogeneous, and the last two are non-local and
heterogeneous.

\begin{table}
  \caption{The result of \num{50} runs of approximating the average
    distance with weighted sampling using $k = 400$ samples (in
    expectation) in each run.  The columns are the number of vertices
    ($n$) and edges ($m$), the locality (loc) and heterogeneity (het),
    the exact average distance (avg dist), the smallest (min) and
    largest (max) estimated average distance among the \num{50} runs,
    and the median relative error (error) over the \num{50} runs.}
  \label{tab:avg_dist_comp}%
  \centering%
  \setlength{\tabcolsep}{0.9\tabcolsep}%
  \footnotesize%
\begin{tabular}{lrrrrrrrr}
  \toprule
graph & $n$ & $m$ & loc & het & avg dist & min & max & error \\ 
  \midrule
crack & \num{10240} & \num{30380} & \num{0.77} & \num{-0.50} & \num{41.00} & \num{40.48} & \num{41.68} & \SI{0.44}{\%} \\ 
  inlet & \num{11730} & \num{220296} & \num{0.82} & \num{-0.56} & \num{33.81} & \num{33.24} & \num{34.60} & \SI{0.38}{\%} \\ 
  socfb-Columbia2 & \num{11706} & \num{444295} & \num{0.61} & \num{0.05} & \num{2.84} & \num{2.81} & \num{2.86} & \SI{0.33}{\%} \\ 
  ca-HepPh & \num{11203} & \num{117618} & \num{0.90} & \num{0.36} & \num{4.67} & \num{4.63} & \num{4.71} & \SI{0.21}{\%} \\ 
  sinc15 & \num{11532} & \num{564607} & \num{0.10} & \num{-0.38} & \num{3.05} & \num{3.00} & \num{3.08} & \SI{0.43}{\%} \\ 
  fd15 & \num{11532} & \num{44206} & \num{0.30} & \num{-0.58} & \num{6.02} & \num{5.96} & \num{6.10} & \SI{0.37}{\%} \\ 
  escorts & \num{10106} & \num{39016} & \num{0.24} & \num{0.26} & \num{4.20} & \num{4.18} & \num{4.24} & \SI{0.29}{\%} \\ 
  air03 & \num{10757} & \num{91006} & \num{0.13} & \num{0.83} & \num{2.46} & \num{2.44} & \num{2.48} & \SI{0.21}{\%} \\ 
   \bottomrule
\end{tabular}
\end{table}

In the remainder of this section, we want to discuss two aspects of
computing the average distances that are besides the point of this
paper but nonetheless interesting in their own right.  First, there is
a simple way to improve the approximation quality of the algorithm by
Chechik, Cohen, and Kaplan~\cite{Avera_Dista_Queri_throu_APPROX2015},
which we describe in Section~\ref{sec:avg-dist-improvement-details}.
Moreover, in Section~\ref{sec:avg-dist-uniform-vs-weighted-details},
we compare the algorithm to just sampling vertex-pairs uniformly.

\subsection{Improved Approximation Quality}
\label{sec:avg-dist-improvement-details}

Given a graph $G = (V, E)$ and a parameter $k$, the algorithm by
Chechik, Cohen, and Kaplan~\cite{Avera_Dista_Queri_throu_APPROX2015}
works roughly as follows.  It first computes a probability $p_v$ for
each vertex $v \in V$ such that the sum of all probabilities is in
$O(k)$.  Then, a sample $S \subseteq V$ is created by including $v$ in
$S$ with probability $p_v$, independently of the other vertices.  Note
that $S$ has expected size $\EX{|S|} \in O(k)$.  Finally, for each
vertex $u \in S$, a BFS from $u$ is run, summing the distances from
$u$ to other vertices, scaled with the factor $1/p_u$ to accommodate
for the fact that $u$ was chosen as sample with probability $p_u$.

Our improvement is the following.  Once we have sampled the set $S$,
we know how large $S$ actually is.  Thus, we can condition on the size
of $S$ when looking at the probability that $u$ was chosen as a
sample.  Formally, after sampling $S$, we set
$p_u' = p_u \cdot |S| / \EX{|S|}$, which is the probability that
$u \in S$ conditioned on the size of $S$.  Then, when summing the
distances from $u$ to other vertices, we scale these distances with
$1 / p_u'$ instead of with $1 / p_u$.

To evaluate this change, we ran both variants of the algorithm (with
and without conditioning on the size of $|S|$) for different values of
$k$ on the instances in Table~\ref{tab:avg_dist_comp}.  We ran each
configuration \num{50} times.  The resulting relative errors are shown
in Figure~\ref{fig:avg_dist_comp_correction}.  Note that the our
improved variant conditioning on the size of $S$ yields a substantially
better approximation.

\begin{figure}
  \centering
  \includegraphics{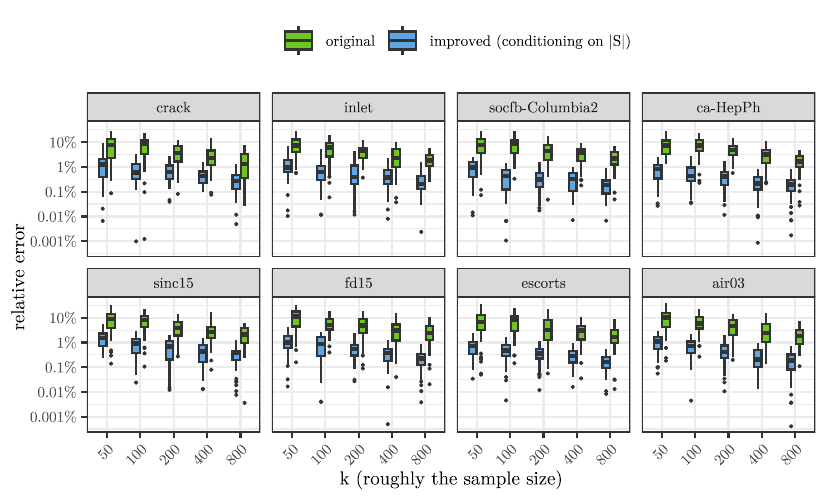}
  \caption{Relative error depending on the parameter $k$, which
    roughly corresponds to the number of samples $|S|$.  Each box
    corresponds to \num{50} runs.  Green boxes show errors for the
    original algorithm as proposed by Chechik, Cohen, and
    Kaplan~\cite{Avera_Dista_Queri_throu_APPROX2015}.  Blue boxes show
    errors after our adjustment of conditioning on $|S|$.  Note that
    both axes are logarithmic.}
  \label{fig:avg_dist_comp_correction}
\end{figure}

\subsection{Uniform vs. Weighted Sampling}
\label{sec:avg-dist-uniform-vs-weighted-details}

In this section, we want to compare the above algorithm with the most
straight-forward method of approximating the average distance:
averaging over the distance between $k$ uniformly and independently
sampled vertex pairs.  In the following, we refer to this as
\emph{uniform sampling}.  Moreover, we refer to the algorithm from
\cite{Avera_Dista_Queri_throu_APPROX2015} with our improvement
described in the previous section as \emph{weighted sampling}.
Figure~\ref{fig:avg_dist_comp} shows a comparison of the average
errors of uniform and weighted sampling depending on the parameter
$k$\punctuationfootnote{Note that for the weighted sampling, the
  expected size of $S$ is in $O(k)$ but not exactly~$k$.  In our
  experiments, $|S|$ was on average $\num{1.19}\cdot k$.  Uniform
  sampling uses exactly $k$ samples.}.  One can see that the weighted
sampling performs better, which is the expected outcome.

\begin{figure}
  \centering
  \includegraphics{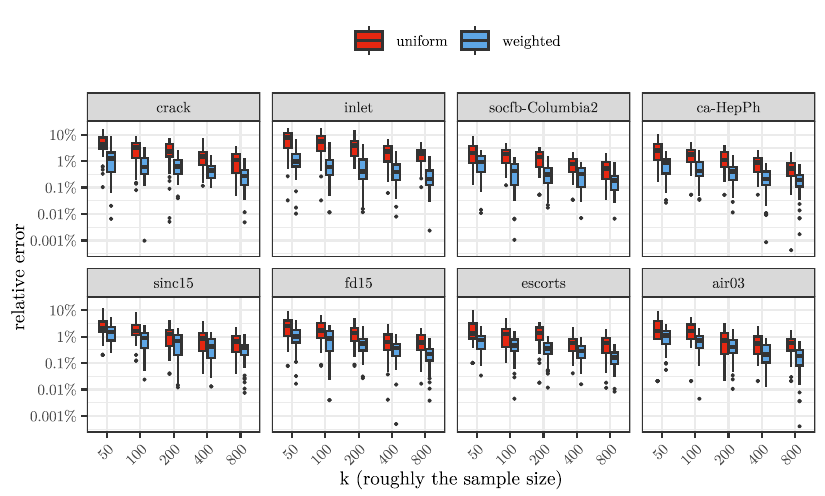}
  \caption{Relative errors depending on the parameter $k$.  Each box
    corresponds to \num{50} runs.  Red and blue boxes show errors for
    uniform and weighted sampling, respectively.  Note that both axes
    are logarithmic.}
  \label{fig:avg_dist_comp}
\end{figure}

\begin{figure}
  \centering
  \includegraphics{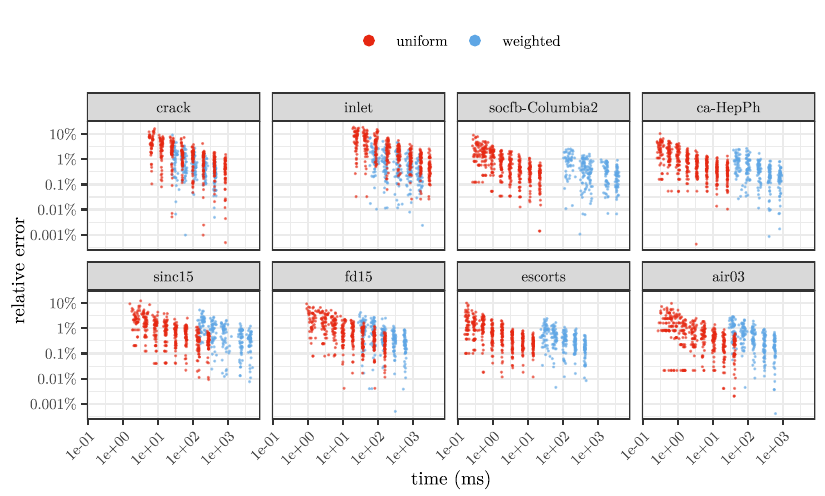}
  \caption{Relative errors depending on the run time.  The parameter
    $k$ determining the sample size ranges from \num{50} to \num{800}
    for the weighted sampling and from \num{50} to \num{6400} for the
    uniform sampling.  The plot shows \num{50} runs for each value of
    $k$.  Note that both axes are logarithmic.}
  \label{fig:avg_dist_comp_time}
\end{figure}

However, this has to be taken with a grain of salt, as weighted
sampling runs a full BFS for every sampled vertex, while uniform
sampling only computes the shortest path between the sampled vertex
pairs.  While both take linear time per sample in the worst case, we
know that the bidirectional BFS can compute shortest paths
substantially quicker on many networks; see
Section~\ref{sec:bidirectional-search}.  To make a more fair
comparison, we additionally ran the uniform sampling for up to
$k = \num{6400}$ vertex pairs and compare the relative errors of
uniform and weighted sampling with respect to running time in
Figure~\ref{fig:avg_dist_comp_time}.  Given our knowledge from
Section~\ref{sec:bidirectional-search}, it is not surprising to see
that weighted sampling is preferable over uniform sampling on the two
networks with high locality and low heterogeneity, as the
bidirectional search is not much faster than running a full BFS for
these networks.  For the other networks, however, doing the weighted
sampling for more samples yields comparable, if not better,
approximations in less time.

This suggests the conclusion that one should use the weighted sampling
on graphs that are local and uniform, while the uniform sampling is
preferable on all other networks.  To cement this conclusion, more
thorough experiments would be necessary, which is beyond the scope of
this paper.

\section{Technical Details Concerning Random Networks}
\label{sec:random-networks-details}

In Section~\ref{sec:giant-component}, we observe that reducing the
generated graphs to their largest connected components does not change
the graph size by too much.  In Section~\ref{sec:girgs-square} we
discuss how we generate GIRGs with square as ground space, instead of
the usual torus.

\subsection{Giant Component}
\label{sec:giant-component}

The generated graphs are not necessarily connected.  Thus, as we
reduce all networks to their largest connected component, the
resulting graphs do not necessarily have $n = \SI{50}{k}$ vertices.
However, the average degree of \num{10} is sufficiently large, so that
the largest connected component is not too much smaller.  Among the
\num{500} generated GIRGs, the smallest graph has still \num{45771}
vertices, with an average of \SI{49.5}{k} vertices.  For the Chung--Lu
graphs, the minimum is \num{45389} and the average \SI{49.2}{k}.  For
Erdős--Rényi, the minimum is already \num{49994}.  Moreover, the
average degrees of all generated networks are close to \num{10}.

\subsection{GIRGs With Square as Ground Space}
\label{sec:girgs-square}

Recall that the GIRG model uses a torus as ground space.  Note that
the torus is completely symmetric in the sense that every point can be
treated the same.  In contrast, if using a square (or hypercube in
higher dimensions) as ground space, there are special cases for points
closer to the boundary of the square, which complicates theoretical
analysis.  For reasons described in Section~\ref{sec:diameter}, the
ground space makes a difference for the diameter computation.  Here we
describe how we used the GIRG generator, which usually works with the
torus, to generate GIRGs with unit square as ground space (still using
the maximum norm).

Recall that the $d$-dimensional torus $\mathbb T^d = [0, 1]^d$ behaves
like the $d$-dimensional cube except that distances wrap around the
boundaries in each dimension.  Thus, when restricting $T^d$ to points
in $[0, 0.5]^d$, the distances within this part of the torus behave
exactly as distances in a $d$-dimensional cube.  We use this to
generate GIRGs with a $d$-dimensional hypercube as follows.  The
generator first samples a point in $[0, 1]^d$ for each vertex.  Before
sampling the edges based on these vertex positions, we scale all
coordinates by $0.5$, yielding points in $[0, 0.5]^d$.  To accommodate
for the fact that this leads to smaller distances, we scale the vertex
weights by $0.5^d$.  Afterwards, we generate the edges as for the
torus~\cite{Effic_Gener_Geome_Inhom_ESA2019}.

To see that scaling all weights by $0.5^d$ achieves roughly the right
average degree, recall that the probability for two vertices $u$ and
$v$ to be connected is
\begin{equation*}
  p_{u, v} =
  \min\left\{\left(\frac{1}{\geomdist{\pnt{u}}{\pnt{v}}^d}\cdot\frac{w_u
        w_v}{W}\right)^{\frac{1}{T}}, 1\right\}.
\end{equation*}
Note that scaling all weights by $0.5^d$ increases $w_uw_v$ by
$0.5^{2d}$ and the sum of all weights $W$ by $0.5^d$, which yields a
total increase of $0.5^d$ contributed by the weights.  Moreover,
scaling all coordinates by $0.5$ decreases most distances by a factor
of $0.5$, which cancels out with the $0.5^d$ coming from the weights.
Note that not all distances are actually scaled by $0.5$: If the
geodesic between $\pnt{u}$ and $\pnt{v}$ wraps around the torus, then
the distance between them might actually be increased due to the
scaling.  However, in the context of GIRGs, this is only relevant for
a sublinear fraction of vertex pairs and is thus not too relevant.
Moreover, by just scaling the weights we also ignore the minimum in
the formula of $p_{u, v}$.  However, the deviation from the desired
average degree was not too big, as we report in the following.

Among the \num{500} generated GIRGs with square as ground space, the
smallest graph still has \num{45468} vertices, with an average of
\SI{49.3}{k} vertices.  The average degrees are slightly below the
target value, ranging from \num{8.99} to \num{10.24} with an average
of \num{9.56}.

\section{Maximal Cliques, Degeneracy, and (Weak) Closure}
\label{sec:maximal-cliques-ext}

In this section we provide additional experiments on the number of
maximal cliques in the context of degeneracy and (weak) closure.
Section~\ref{sec:maximal-cliques-degen-closure} gives a formal
definition of these parameters.  In Section~\ref{sec:comp-weak-clos}
we provide an algorithm for efficiently computing the weak closure,
which might be interesting in its own right.  In
Section~\ref{sec:maximal-cliques-impact-degen-closure} we discuss the
dependence of the number of cliques on the parameters.  In
Section~\ref{sec:maximal-cliques-relation} we study how the parameters
relate to each other.

\subsection{Degeneracy and (Weak) Closure}
\label{sec:maximal-cliques-degen-closure}

The \emph{degeneracy} $d$ of a network is the smallest number such
that iteratively removing vertices of degree at most $d$ eliminates
the whole graph. 

A network is \emph{$c$-closed} if every pair of vertices with at least
$c$ common neighbors is connected.  Note that a single pair of
vertices with many common neighbors already leads to a high
closure. The weak closure addresses this issue as follows.  A pair of
non-adjacent vertices is a \emph{$c$-bad pair} if they have at least
$c$ common neighbors.  We call a vertex \emph{$c$-bad} if it appears
in a $c$-bad pair.  Otherwise, we call it \emph{$c$-good}.  A graph is
\emph{weakly $c$-closed} if iteratively removing $c$-good vertices
eliminates the whole graph.  Note that a $c$-closed graph is also
weakly $c$-closed.

Also note the similarity between weak closure and degeneracy.  Both
are defined via an elimination order on the vertices.  Moreover, a
vertex $v$ can have at most $\deg(v)$ common neighbors with another
vertex.  Thus, for a weakly $c$-closed graph with degeneracy $d$ it
holds that $c - 1 \le d$.

\subsection{Computing the Weak Closure}
\label{sec:comp-weak-clos}

Before we describe how we compute the weak closure, note that the
closure can be computed as follows.  For every vertex $v$, look at the
first two layers of the BFS-tree from $v$ and for each node $w$ in the
second layer, count the number of length-2 paths from $v$ to $w$.  The
node $v$ is $c$-good if this count is strictly lower than $c$ for all
vertices $w$ in the second layer.  Moreover, the graph is $c$-closed
if all vertices are $c$-good.  The running time for each vertex is
dominated by the sum of degrees of vertices in the first layer of the
BFS-tree.  Overall, every vertex $v$ appears $\deg(v)$ times in the
first layer, yielding a running time of
$O\left(\sum_{v\in V}\deg^2(v)\right)$.  This is sufficiently fast for
all networks in our data set.

A naive approach to compute the weak closure is as follows.  Start
with $c = 1$.  Then run the above procedure to find the minimum value
$c'$ such that there exists a $c'$-good vertex.  If $c < c'$, set
$c = c'$ and then remove all $c$-good vertices.  Repeat this until the
graph is completely eliminated.  The graph is then weakly $c$-closed
for the final value of $c$.  Though this is fast for many networks, it
is prohibitively slow for some.
In the following, we describe how to improve the running time at the
cost of a higher memory consumption.  Afterwards, we discuss how to
decrease the memory footprint to a reasonable level.

For every vertex $v$, we have a priority queue $Q_v$ that contains all
vertices of distance two from $v$.  For such a vertex $w$, the
priority is set to the number of common neighbors of $v$ and $w$,
i.e., a priority of $c_{v, w}$ indicates that $v$ and $w$ form a
$c_{v, w}$-bad pair.  Let $c_v$ be the largest priority in $Q_v$.
Note that, $v$ is $c_v$-bad but $(c_v + 1)$-good.  Thus, we want to
iteratively remove the vertex $v$ with minimum $c_v$.  To this end, we
maintain one additional priority queue $Q$ containing all vertices,
using $c_v$ as priority for $v$.

When removing a vertex $u$, we have to update the neighbors of $u$, as
they lose $u$ as common neighbor.  This is done as follows.  Let $v$
and $w$ be two neighbors of $u$ that are not connected by an edge.
Recall that the queue $Q_v$ contains $w$ with the priority $c_{v, w}$
indicating the number of common neighbors of $v$ and $w$.  As they
lost $u$ as common neighbor, we have to decrease this priority in
$Q_v$ by \num{1}.  If this decreases $c_v$, i.e., the maximum priority
in $Q_v$, we also have to adapt the priority of $v$ in the queue $Q$
accordingly.

This algorithm takes $O\left(\sum_{v\in V}\deg^2(v)\right)$ time to
initialize the data structures.  Moreover, when removing vertex $u$,
we have to update $\deg^2(u)$ neighbor pairs.  Assuming all
queue-operations take constant time,\footnote{This can be achieved
  using a bucket heap that makes use of the fact that the range of
  possible integer priorities is bounded.} this takes overall
$O\left(\sum_{v\in V}\deg^2(v)\right)$ time.  Unfortunately, it also
requires that amount of memory, which is prohibitive for some
instances.

To improve the memory consumption, we make use of the following
observation.  Taking the squares of the degrees is particularly bad if
the graph contains vertices of high degree.  A vertex of high degree
is responsible for many pairs of vertices that have distance \num{2}.
However, most of these pairs do not have many common neighbors, and
thus will never be a $c$-bad pair for a relevant value of $c$.  To
phrase it differently, if we want to show that the graph is weakly
$c$-closed, we can ignore all distance-2 pairs $v, w \in V$ that have
fewer than $c$ common neighbors.  Thus, using the above notation for
the queues, we do not have to insert $w$ in the queue $Q_v$ if
$c_{v, w} < c$.

Thus, we start with the guess that the given graph is $\bar c$-closed
for some value $\bar c$.  Then, we compute the elimination order as
above, but in the initialization of the data structures, we ignore all
pairs with fewer than $\bar c$ common neighbors.  Assume the procedure
concludes that the graph is $c$-closed, i.e., all vertices were
$c$-good at the time of removal, but some $(c-1)$-bad vertices had to be
removed.  If $c \ge \bar c$, then the ignored vertex pairs have fewer
than $c$ common neighbors.  Thus, none of these vertex pairs turns a
$c$-good vertex into a $c$-bad vertex, implying that the result is
correct despite ignoring some vertex pairs.  On the other hand, if
$c < \bar c$, we may have ignored some crucial pairs and have to rerun
the procedure with lower $\bar c$.

Preliminary experiments showed that guessing $\bar c$ even slightly
too low can lead to high memory consumption, while guessing $\bar c$
too high is computationally not very expensive.  Starting with
$\bar c = 30$ and successively decreasing it by \num{1} if necessary
yields acceptable\footnote{We note that there is probably still plenty
  room for fine-tuning.  However, this is beyond the scope of this
  paper.}  running times and memory footprints for all instances in
our data set.

\begin{figure}[t!]
  \centering
  \includegraphics{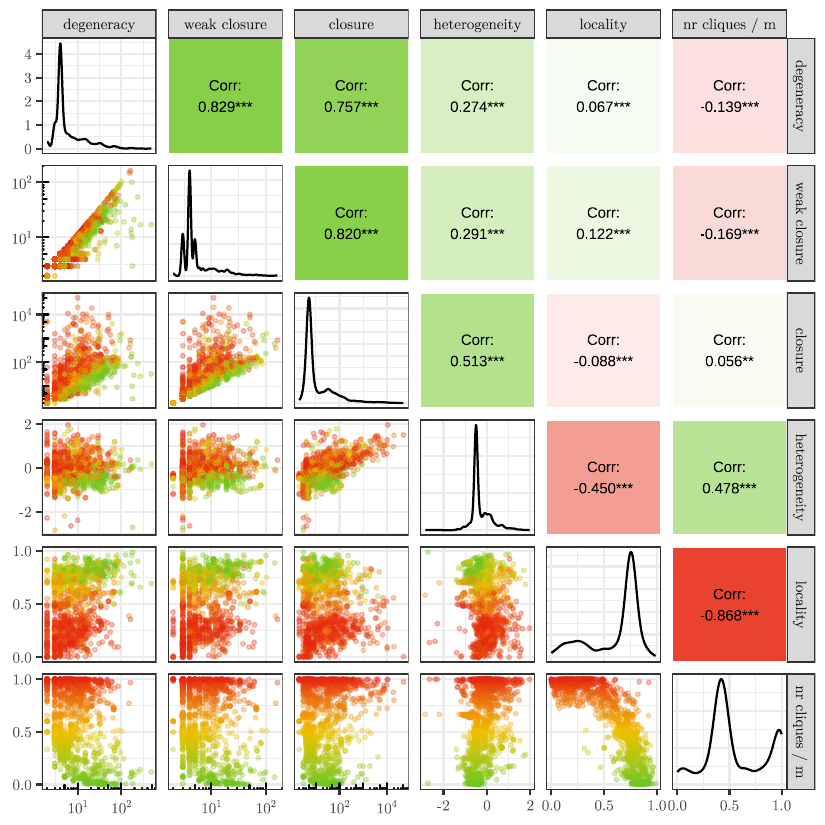}
  \caption{Pairwise comparison of different variables for the networks
    with at most $m$ cliques.  The upper triangle shows the Spearman
    correlation between the two corresponding variables. The colors go
    from green (positive correlation) over white (no correlation) to
    red (negative correlation).  The stars indicate p-values
    (``\texttt{***}'': $< 0.001$, ``\texttt{**}'': $< 0.01$,
    ``\texttt{*}'': $< 0.05$, ``\texttt{.}'': $< 0.1$, ``~'':
    otherwise).  The diagonal shows the density of each individual
    variable.  The lower triangle shows scatter plots of the networks
    with respect to two variables.  The colors indicate the number of
    cliques (relative to $m$).  Axes for degeneracy, weak closure,
    and closure are logarithmic.}
  \label{fig:cliques-corr}
\end{figure}

\begin{figure}[t!]
  \centering
  \includegraphics{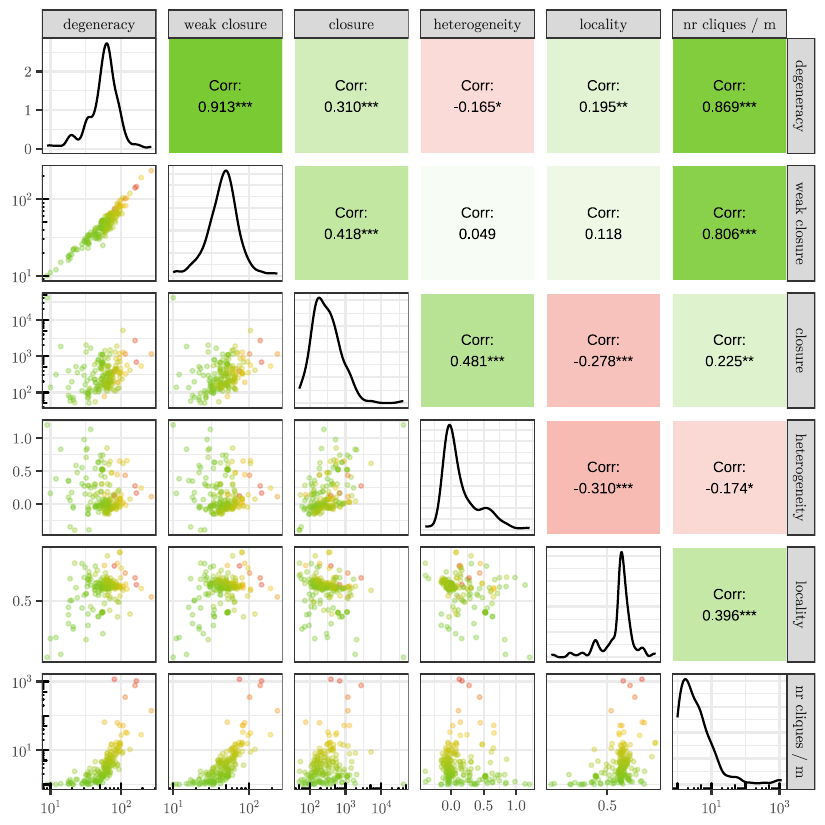}
  \caption{Pairwise comparison as in Figure~\ref{fig:cliques-corr} but
    for the networks with more than $m$ cliques.  In addition to
    degeneracy, weak closure, and closure, the axis for the relative
    number of cliques is logarithmic.}
  \label{fig:cliques-corr-outlier}
\end{figure}

\subsection{Impact of Degeneracy and Closure on the Number of Cliques}
\label{sec:maximal-cliques-impact-degen-closure}

Figure~\ref{fig:cliques-corr} and
Figure~\ref{fig:cliques-corr-outlier} show a pairwise comparison of
degeneracy, weak closure, closure, heterogeneity, locality, and number
of cliques (relative to $m$) for all networks with at most $m$ cliques
and more than $m$ cliques, respectively.  The bottom row and the
right-most column compare the number of cliques with the other
parameters.

\subsubsection{Networks With at Most $m$ Cliques.}

In Figure~\ref{fig:maximal-cliques}, we already saw that the number of
cliques increases for decreasing locality and we saw a slight increase
for increasing heterogeneity.  This matches to what we observe here: a
strong negative correlation to the locality and a weaker positive
correlation to the heterogeneity.

For degeneracy and (weak) closure, theoretical results show that low
values for these parameters guarantee a low number of
cliques~\cite{Listi_Maxim_Cliqu_Large_jour2013,
  Findi_Cliqu_Socia_Netwo_jour2020}.  Though these theoretical bounds
operate in a completely different regime (above $m$ by factors
exponential in the parameter, not below $m$), one could nonetheless
hope that these parameters serve as a good measure for the hardness of
an instance, i.e., that the number of cliques positively correlates
with them.  Figure~\ref{fig:cliques-corr} shows that this hope is not
justified.  There is little to no correlation with the closure and
even a slightly negative correlation with degeneracy and weak closure.

\subsubsection{Networks With More Than $m$ Cliques.}

For the networks with more than $m$ cliques
(Figure~\ref{fig:cliques-corr-outlier}), one can see a strong positive
correlation of the number of cliques with the degeneracy and the weak
closure.  Thus, for these fewer somewhat hard instances, the
degeneracy and weak closure serve as good measures for how hard an
instance actually is.  This qualitatively matches the theoretical
bounds of $O(dn3^{d/3})$ \cite{Listi_Maxim_Cliqu_Large_jour2013} and
$n^23^{(c - 1)/3}$ \cite{Findi_Cliqu_Socia_Netwo_jour2020}, where $d$
is the degeneracy and $c$ the weak closure.  For the closure on the
other hand, there is only a slight correlation.

\subsection{Relation Between the Parameters}
\label{sec:maximal-cliques-relation}

Besides studying the number of cliques with respect to the different
parameters, it is also interesting to compare the parameters with each
other.  Closure and weak closure have both been introduced to
formalize the same concept (triadic closure), thus one would suspect
them to be similar.  Moreover, recall that the weak closure and the
degeneracy are both defined via elimination orders on the vertices and
that $c - 1 \le d$ for weakly $c$-closed graphs with degeneracy $d$.
Thus, one can expect them to be similar on networks with few
triangles.  For graphs with many triangles (high locality), it is
interesting to see whether the weak closure captures the concept of
triadic closure well, i.e., whether $c - 1$ is substantially smaller
than $d$.

Though the study of the relation between these parameters is
independent of the number of cliques, we still consider the partition
into networks with at most or more than $m$ cliques for the following
reason.  Our interest in degeneracy and (weak) closure comes from
trying to understand how many cliques a network has.  As we have seen
before, these parameters do not work well in this regard for networks
with at most $m$ cliques, i.e., they are more relevant for the
networks with more than $m$ cliques.  If we, however, would consider
all networks together, the networks with at most $m$ cliques would
dominate the overall picture as they make up \SI{93}{\%} of all
networks.

\subsubsection{Networks With at Most $m$ Cliques.}

We can see in Figure~\ref{fig:cliques-corr} that degeneracy, closure,
and weak closure are all positively correlated.  We can also see that
the degeneracy and the weak closure range in the same order of
magnitude, while the closure is orders of magnitude larger.

Concerning correlation to heterogeneity and locality, we can observe a
slight positive correlation of all three with the heterogeneity.  This
makes sense as all three parameters can only be high if there are
vertices of high degree.  Moreover, the correlation with locality is
weaker or not present at all.  For the (weak) closure, this is
unexpected as they are meant to capture the existence of many
triangles, which corresponds to a high locality.  It is particularly
surprising that there is a slight positive correlation of the weak
closure with locality, i.e., a higher locality tends to lead to a
larger weak closure, which is the opposite of what one would expect.

A possible explanation for this can be obtained by observing that
closure focuses on non-edges while our definition of locality focuses
on edges.  Slightly simplifying, this difference can be stated as
follows.
\begin{description}
\item[high locality:] $\{u, v\} \in E$ $\rightarrow$ $u$ and $v$ have
  many common neighbors
\item[small closure:] $\{u, v\} \notin E$ $\rightarrow$ $u$ and $v$ do
  not have many common neighbors
\end{description}

\begin{figure}[t]
  \centering
  \begin{minipage}[t]{.48\textwidth}
    \centering
    \includegraphics{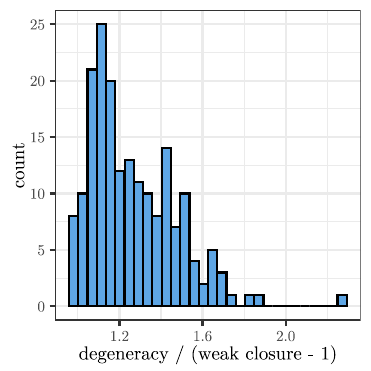}
    \captionof{figure}{Distribution of the relative difference between
      degeneracy and weak closure values for the networks with more
      than $m$ cliques.}
    \label{fig:degen_vs_weak_closure}
  \end{minipage}\hfill
  \begin{minipage}[t]{.48\textwidth}
    \centering
    \includegraphics{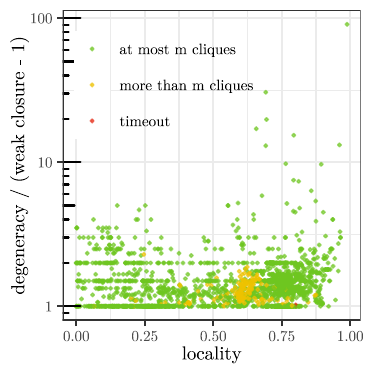}
    \captionof{figure}{Relative difference between degeneracy and weak
      closure depending on the locality.  They show only a slight
      correlation (\num{0.10} for networks with at most $m$ cliques
      and \num{0.24} for networks with more than $m$ cliques).}
    \label{fig:degen_vs_weak_closure_locality}
  \end{minipage}
\end{figure}

\subsubsection{Networks With More Than $m$ Cliques.}

For the networks where the degeneracy and weak closure were a good
predictor for the number of cliques, we see an even stronger
correlation between these two parameters.  To quantify how much the
degeneracy and the weak closure differ, see
Figure~\ref{fig:degen_vs_weak_closure}.  One can see that the weak
closure is usually not much smaller than the upper bound given by the
degeneracy.  Thus, the weak closure is indeed very similar to the
degeneracy.  Hence it mostly captures the sparsity of a network rather
than the tendency to have many triangles.

On these networks the correlation of the closure to degeneracy and
weak closure is less pronounced than for the networks with at most $m$
cliques.  Concerning locality and heterogeneity, there is no
correlation with degeneracy or weak closure.  The closure correlates
positively with heterogeneity and to a smaller extent negatively with
locality.  As for the other set of networks, this again indicates that
the parameter closure is more susceptible to the degree distribution
than to the existence of triangles.

\subsubsection{Weak Closure and Locality}

As mentioned earlier, the correlation between weak closure and
degeneracy is not surprising as degeneracy is an upper bound to weak
closure (minus \num{1}).  Here we want to study whether the difference
between these two is correlated with the locality.  If this difference
comes mostly from the existence of many triangles, then one would
expect a bigger difference for graphs with high locality.  However,
Figure~\ref{fig:degen_vs_weak_closure_locality} shows that this is not
really the case.  This consolidates the previous observation that
(weak) closure does not capture the concept of locality well.

\section{Networks with Extreme Heterogeneity}
\label{sec:extreme-heterogeneity-ext}

In this section, we provide the figures from Section~\ref{sec:comp-betw-real-world-and-models} including real-world networks with extreme heterogeneity. Figure~\ref{fig:extreme-heterogeneity-1} and Figure~\ref{fig:extreme-heterogeneity-2} show the respective middle plots; the thresholds for extreme heterogeneity are marked with vertical lines, with triangle-shaped data points representing real-world networks with extreme heterogeneity, i.e., data points outside the thresholds.
Across all considered algorithms, the behavior on networks with extreme heterogeneity roughly follows those of the other networks, showing similar trends of algorithm behavior dependence on locality and heterogeneity.

\begin{figure}[t!]
	\centering
	\begin{subfigure}{0.45\textwidth}
		\centering
		\includegraphics{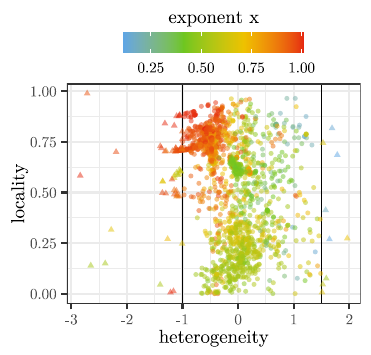}
		\caption{Figure~\ref{fig:bbbfs} (bidirectional BFS)}
		\label{fig:bbbfs-full}
	\end{subfigure}
	\begin{subfigure}{0.45\textwidth}
		\centering
		\includegraphics{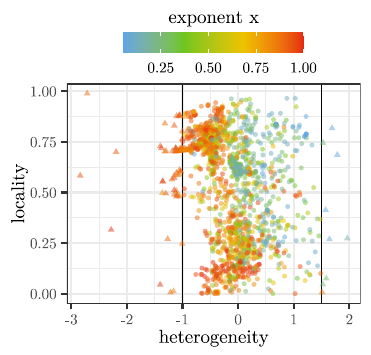}
		\caption{Figure~\ref{fig:diameter-ifub-hd} (iFUB+hd)}
		\label{fig:diameter-ifub-hd-full}
	\end{subfigure}
	\begin{subfigure}{0.45\textwidth}
		\centering
		\includegraphics{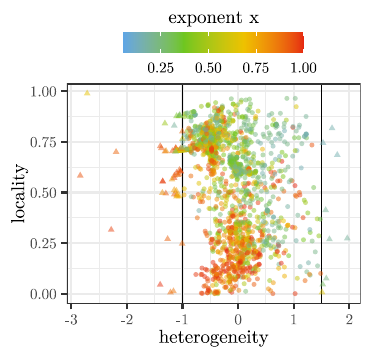}
		\caption{Figure~\ref{fig:diameter-ifub-foursweep} (iFUB+4-sweephd)}
		\label{fig:diameter-ifub-foursweephd-full}
	\end{subfigure}
	\begin{subfigure}{0.45\textwidth}
		\centering
		\includegraphics{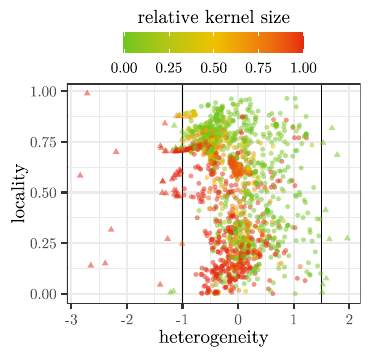}
		\caption{Figure~\ref{fig:vertex-cover-domination} (vertex cover domination)}
		\label{fig:vertex-cover-full}
	\end{subfigure}
	\begin{subfigure}{0.45\textwidth}
		\centering
		\includegraphics{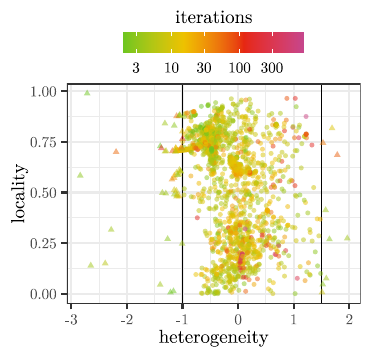}
		\caption{Figure~\ref{fig:louvain} (Louvain)}
		\label{fig:louvain-full}
	\end{subfigure}
	\caption{Full versions of the middle plots of several figures from Section~\ref{sec:comp-betw-real-world-and-models} including real-world networks with extreme heterogeneity. The thresholds for extreme heterogeneity are marked with vertical lines, with triangle-shaped data points representing real-world networks with extreme heterogeneity.}
	\label{fig:extreme-heterogeneity-1}
\end{figure}

\begin{figure}[t!]
	\centering
	\begin{subfigure}{0.45\textwidth}
		\centering
		\includegraphics{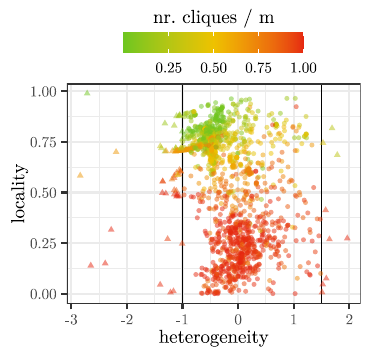}
		\caption{Figure~\ref{fig:maximal-cliques} (maximal cliques)}
		\label{fig:cliques-full}
	\end{subfigure}
	\begin{subfigure}{0.45\textwidth}
		\centering
		\includegraphics{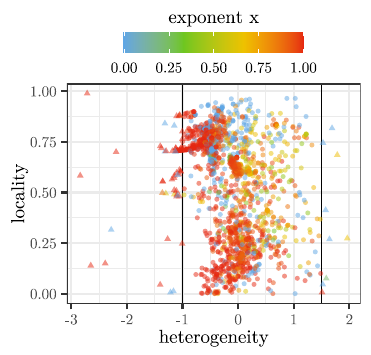}
		\caption{Figure~\ref{fig:coloring-low-deg} (chromatic number reduction)}
		\label{fig:coloring-full}
	\end{subfigure}
	\begin{subfigure}{0.45\textwidth}
		\centering
		\includegraphics{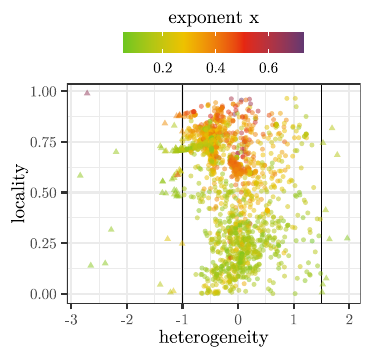}
		\caption{Figure~\ref{fig:max-clique-size-and-degen-clique} (clique number)}
		\label{fig:cliques-max-clique-full}
	\end{subfigure}
	\begin{subfigure}{0.45\textwidth}
		\centering
		\includegraphics{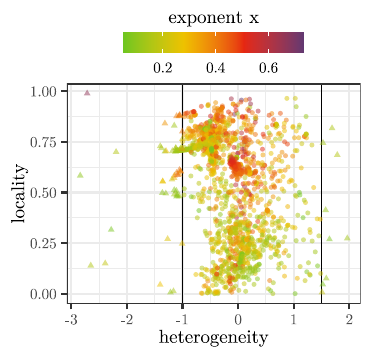}
		\caption{Figure~\ref{fig:max-clique-size-and-degen-degen} (degeneracy)}
		\label{fig:degeneracy-full}
	\end{subfigure}
	\caption{Additional full versions of the middle plots of several figures from Section~\ref{sec:comp-betw-real-world-and-models} including real-world networks with extreme heterogeneity. The thresholds for extreme heterogeneity are marked with vertical lines, with triangle-shaped data points representing real-world networks with extreme heterogeneity.}
	\label{fig:extreme-heterogeneity-2}
\end{figure}

\end{document}